\theoremstyle{theorem}
\newtheorem{theorem}{Theorem}
\newtheorem{lemma}[theorem]{Lemma}
\theoremstyle{definition}
\newtheorem{definition}{Definition}
\theoremstyle{remark}
\newtheorem{example}[theorem]{Example}
\newcommand{\vspaces}[1]{}
\title{
Improving Strategies via {SMT} Solving
\thanks{\protect\raisebox{-3mm}{\protect\includegraphics[width=6mm]{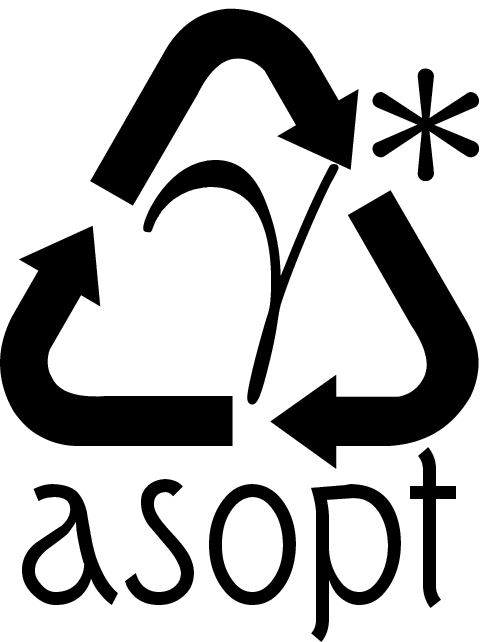}}\quad This work was partially funded by the ANR project ``ASOPT''.}
}
\author{
  Thomas Martin Gawlitza
  \and
  David Monniaux%
  \thanks{Both authors from CNRS (\emph{Centre national de la recherche scientifique}), VERIMAG laboratory. VERIMAG is a joint laboratory of CNRS and Universit\'e Joseph Fourier (Grenoble). Email \emph{firstname}.\emph{lastname}@imag.fr}}
\date{\today}
\renewcommand{\tom}[1]{\ifthenelse{\boolean{draft}}{\nb{#1}}{}}
\newcommand{\np}{\ifthenelse{\boolean{draft}}{\newpage}{}}
\begin{document}
\maketitle
\newcommand{\FIPlayVal}[2]{\PlayVal{#1}{#2}^{\mathrm{filar}}}
\newcommand{\FIGameVal}[2]{\GameVal{#1}{#2}^{\mathrm{filar}}}
\newcommand{\SCPlayVal}[2]{\PlayVal{#1}{#2}^{\mathrm{sc}}}
\newcommand{\SCGameVal}[2]{\GameVal{#1}{#2}^{\mathrm{sc}}}
\newcommand{\CPlayVal}[2]{\PlayVal{#1}{#2}^{\mathrm{c}}}
\newcommand{\CGameVal}[2]{\GameVal{#1}{#2}^{\mathrm{c}}}

\renewcommand{\Prob}{\mathbb{P}}

\renewcommand{\PlayPrefix}{\mathbf{P}^*}
\renewcommand{\Play}{{\mathbf{P}^\omega}}
\newcommand{\Dd}{\mathbb{D}_d}
\newcommand{\CDd}{\overline{\Dd}}
\newcommand{\vw}{\;{\vee\hspace{-2.3mm}\wedge}\;}

\newcommand{\closure}{\mathbf{cl}}

\newcommand{\intersect}{\mathbf{intersect}}

\newcommand{\emptynesstest}{\nabla}

\newcommand{\D}{\mathbb{D}}

\newcommand{\ufinsol}{\mathbf{ufinsol}}

\newcommand{\pos}{\mathrm{pos}}
\newcommand{\Pos}{\mathrm{Pos}}

\newcommand{\stru}[1]{\bigskip \noindent {\bf (* \quad #1 \quad *)} \bigskip}

\newcommand{\CQCP}{\mathbf{CQCP}}
\newcommand{\CQCPABc}{\CQCP_{\A, \mathcal{B}, c}}
\newcommand{\SDP}{\mathbf{SDP}}
\newcommand{\SDPAaBC}{\SDP_{\A, a, \B, C}}
\newcommand{\dom}{\mathrm{dom}}
\newcommand{\fdom}{\mathrm{fdom}}

\renewcommand{\S}[2]{S#1^{#2 \times #2}}

\renewcommand{\SR}[1]{S\R^{#1\times #1}}
\newcommand{\SRp}[1]{S\R^{#1\times #1}_+}
\newcommand{\SRpp}[1]{S\R^{#1\times #1}_{++}}

\newcommand{\A}{\mathcal{A}}
\newcommand{\T}{\mathcal{T}}
\newcommand{\Relaxed}{\mathcal{R}}
\newcommand{\Tr}{\mathrm{Tr}}

\newcommand{\Ext}{\mathrm{Ext}}

\renewcommand{\subsection}[1]{\paragraph{#1}}

\newcommand{\assumption}{{\bf (*)} }

\begin{abstract}
We consider the problem of computing numerical invariants of programs by abstract interpretation.
Our method eschews two traditional sources of imprecision:
(i) the use of widening operators for enforcing convergence within
a finite number of iterations (ii) the use of merge operations (often,
convex hulls) at the merge points of the control flow graph. It instead
computes the least inductive invariant expressible in the domain
at a restricted set of program points, and analyzes the rest of the code en bloc. 
We emphasize that we compute this inductive invariant precisely.
For that 
we extend the strategy improvement algorithm of \citet{DBLP:conf/csl/GawlitzaS07}.
If we applied their method directly, 
we would have to solve an exponentially sized system of abstract semantic equations, 
resulting in memory exhaustion.
Instead, we keep the system implicit and discover strategy improvements
using SAT modulo real linear arithmetic (SMT).
For evaluating strategies we use linear programming.
Our algorithm has low polynomial space complexity 
and performs for contrived examples in the worst case exponentially many strategy improvement steps; 
this is unsurprising,  since we show that the associated abstract reachability problem is $\Pi^p_2$-complete.
\end{abstract}

\section{Introduction}
\label{s:intro}

\subsection{Motivation}

Static program analysis attempts to derive properties about the
run-time behavior of a program without running the program.
Among interesting properties are the numerical ones:
for instance, that a given variable
$x$ always has a value in the range $[12,41]$ when reaching a given
program point. An analysis solely based on such interval relations at all program points is known as \emph{interval analysis}~\cite{CouCou76}. More refined numerical analyses include, for instance, 
finding for each program point an enclosing polyhedron for the vector of program variables~\cite{DBLP:conf/popl/CousotH78}. 
In addition to obtaining facts about the values of numerical program variables, numerical analyses are used as building blocks for e.g. pointer and shape analyses.

However, by Rice's theorem, only trivial properties can be checked automatically \cite{rice}.
In order to check non-trivial properties
we are usually forced to use \emph{abstractions}.
A systematic way for inferring properties automatically 
w.r.t.\ a given abstraction is given through 
the \emph{abstract interpretation} framework of \citet{DBLP:conf/popl/CousotC77}.
This framework \emph{safely over-approximates} the run-time behavior of a program.

When using the abstract interpretation framework, we usually have two sources of imprecision.
The first source of imprecision is the abstraction itself: for instance, if the property to be proved needs a non-convex invariant to be established, and our abstraction can only represent convex sets, then we cannot prove the property. Take for instance the 
C-code
\lstinline!y = 0; if (x <= -1 || x >= 1) { if (x == 0) y = 1; }!.
No matter what the values of the variables \lstinline{x} and \lstinline{y} are before the execution of the above C-code,
after the execution the value of \lstinline{y} is $0$.
The invariant $|x| \geq 1$ in the ``then'' branch is not convex, 
and its convex hull includes $x=0$. 
Any static analysis method that computes a convex invariant in this branch will thus 
also include $y = 1$. 
In contrast, our method avoids enforcing convexity, 
except at the heads of loops.

The second source of imprecision are the safe but imprecise methods that are used
for solving  the \emph{abstract semantic equations} that describe
the abstract semantics: 
such methods safely over-approximate exact solutions,
but do not return exact solutions in all cases.
The reason is that we are concerned with abstract domains
that contain infinite ascending chains,
in particular if we are interested
in numerical properties:
the complete lattice of all $n$-dimensional closed real intervals,
used for interval analysis, is an example.
The traditional methods 
are based on Kleene fixpoint iteration
which (purely applied) is not guaranteed to terminate in interesting cases.
In order to enforce termination
(for the price of imprecision)
traditional methods make use of  
the widening/narrowing approach of
\citet{DBLP:conf/popl/CousotC77}.
Grossly, widening extrapolates the first iterations
of a sequence to a possible limit, but can easily overshoot the desired result.
In order to avoid this, various tricks are used, including
``widening up to''  \citep[Sec.~3.2]{Halbwachs:CAV93}, 
``delayed'' or with ``thresholds'' \citep{ASTREE:PLDI03}.
However, 
these tricks, although they may help in many practical cases,
are easily thwarted. \citet{DBLP:conf/cav/GopanR06} proposed ``lookahead widening'', which discovers new feasible paths and adapts widening accordingly;
again this method is no panacea.
Furthermore, analyses involving widening are \emph{non-monotonic}:
stronger preconditions can lead to weaker invariants being automatically inferred; 
a rather non-intuitive behaviour. 
Since our method does not use widening at all, it avoids these problems.

\subsection{Our Contribution}

We fight both sources of imprecision noted above:

\begin{compactitem}
  \item
    In order to improve the precision of the abstraction,
    we abstract sequences of if-then-else statements without loops
    en bloc. 
    In the above example, 
    we are then able to conclude that $y \neq 0$ holds.
    In other words:
    we abstract sets of states only at the heads of loops, or, more generally,
    at a cut-set of the control-flow graph
    (a cut-set is a set of program points such that
    removing them would cut all loops). 

  \item
    Our main technical contribution consists of a practical method for
    precisely computing abstract semantics of affine programs
    w.r.t.\ the template linear constraint domains 
    of \citet{DBLP:conf/vmcai/SankaranarayananSM05},
    with sequences of if-then-else statements which do not contain loops abstracted en bloc.
    Our method is based on a strict generalization of the strategy improvement algorithm of
    \citet{DBLP:conf/esop/GawlitzaS07,DBLP:conf/csl/GawlitzaS07,gawlitza_sas_10}.
    The latter algorithm could be directly applied to the problem we solve in this
    article, but the size of its input would be exponential in the size of
    the program, 
    because we then need to explicitly enumerate all program paths
    between cut-nodes which do not cross other cut-nodes.
    In this article, 
    we give an algorithm with low polynomial memory consumption 
    that uses exponential time in the worst case.
    The basic idea 
    consists in avoiding an explicit 
    enumeration of all paths through sequences of if-then-else-statements which do not contain loops.
    Instead we use a SAT
    modulo real linear arithmetic solver for improving the current
    strategy locally.
    For evaluating each strategy 
    encountered during the strategy iteration,
    we use linear programming.

\item    
    As a byproduct of our considerations we show that 
    the corresponding abstract reachability problem
    is $\Pi^p_2$-complete. 
    In fact, we show that it is $\Pi^p_2$-hard even if the
    loop invariant being computed consists in a single $x \leq C$
    inequality where $x$ is a program variable and $C$ is the parameter of
    the invariant. 
    Hence, exponential worst-case running-time seems to be unavoidable.
\end{compactitem}

\subsection{Related Work}

Recently,
several alternative approaches for computing numerical invariants
(for instance w.r.t.\ to template linear constraints) were developed:

\subsubsection{Strategy Iteration}

Strategy iteration 
(also called \emph{policy iteration})
was introduced by Howard for
solving stochastic control problems \cite{Howard,Puterman}
and is also applied to two-players zero-sum games 
\cite{HoffmanKarp,Puri95,Voege00}
or min-max-plus systems \cite{Cochet99}.
\citet{Costan05,DBLP:conf/esop/GaubertGTZ07,DBLP:conf/esop/AdjeGG10}
developed a strategy iteration approach for solving the 
abstract semantic equations that occur in static program analysis
by abstract interpretation.
Their approach can be seen as an alternative to 
the traditional widening/narrowing approach.
The goal of their algorithm is to compute least fixpoints of monotone self-maps $f$,
where $f(x) = \min \; \{ \pi(x) \mid \pi \in \Pi \}$ for all $x$ and $\Pi$ is a family of self-maps.
The assumption is that one can efficiently compute the least fixpoint $\mu \pi$ of $\pi$ for every $\pi \in \Pi$.
The $\pi$'s are the (min-)strategies.
Starting with an arbitrary min-stratgy $\pi^{(0)}$,
the min-strategy is successively improved.
The sequence $(\pi^{(k)})_k$ of attained min-strategies results in a decreasing sequence
$\mu \pi^{(0)} > \mu \pi^{(1)} > \cdots > \mu \pi^{(k)}$
that stabilizes,
whenever $\mu \pi^{(k)}$ is a fixpoint of $f$
---
not necessarily the least one.
However, 
there are indeed important cases,
where minimality of the obtained fixpoint can be guaranteed
\cite{2008arXiv0806.1160A}.
Moreover,
an important advantage of their algorithm is that it can be stopped at any time with a safe over-approximation.
This is in particular interesting if there are infinitely many min-strategies \cite{DBLP:conf/esop/AdjeGG10}.
\citet{Costan05} showed how to use their framework for 
performing interval analysis without widening.
\citet{DBLP:conf/esop/GaubertGTZ07} 
extended this work to the following \emph{relational} abstract domains:
The \emph{zone domain} \cite{DBLP:conf/pado/Mine01}, 
the \emph{octagon domain} \cite{DBLP:conf/wcre/Mine01} and 
in particular the
\emph{template linear constraint domains} 
\cite{DBLP:conf/vmcai/SankaranarayananSM05}.
\citet{DBLP:conf/csl/GawlitzaS07} presented 
a practical (max-)strategy improvement algorithm 
for computing least solutions of  
\emph{systems of rational equations}.
Their algorithm enables them to perform 
a template linear constraint analysis \emph{precisely}
---
even if the mappings are not non-expansive.
This means: Their algorithm always computes \emph{least} solutions of 
abstract semantic equations --- not just some solutions.

\subsubsection{Acceleration Techniques}

\citet{DBLP:conf/sas/GonnordH06,Gonnord_PhD}
investigated an improvement of linear relation analysis that consists in computing,
when possible,
the exact (abstract) effect of a loop.
The technique is fully compatible with the use of widening, 
and whenever it applies, 
it improves both the precision and the performance of the analysis.
\citet{LEROUX-SUTRE-SAS2007,DBLP:conf/birthday/GawlitzaLRSSW09}
studied cases where interval analysis can be done in polynomial time w.r.t.\ 
a uniform cost measure,
where memory accesses and arithmetic operations are counted for $\mathcal{O}(1)$.
%

\subsubsection{Quantifier Elimination}

Recent improvements in SAT/SMT solving techniques have made it possible to perform quantifier elimination on larger formulas \citep{DBLP:conf/lpar/Monniaux08}.
\citet{DBLP:conf/popl/Monniaux09}
developed an analysis method based on quantifier elimination in the theory of
rational linear arithmetic. This method targets the same domains as the present article; it however produces a richer result. 
It can not only compute the least invariant inside the abstract domain of a loop, but also express it as a function of the precondition of the loop; the method outputs the source code of the optimal abstract transformer mapping the precondition to the invariant. Its drawback is its high cost, which makes it practical only on small code fragments; thus, its intended application is \emph{modular analysis}: analyze very precisely small portions of code (functions, modules, nodes of a reactive data-flow program,~\ldots), and use the results for analyzing larger portions, perhaps with another method, 
including the method proposed in this article.

\subsubsection{Mathematical Programming}

\citet{Colon_CAV03,Sankaranarayanan_SAS04,Cousot05-VMCAI} presented approaches for generating linear invariants 
that uses \emph{non-linear constraint solving}.
\citet{leo09} propose a mathematical programming formulation 
whose constraints define the space of all post-solutions of the abstract semantic equations.
The objective function aims at minimizing the result.
For programs that use affine assignments and affine guards, only,
this yields a \emph{mixed integer linear programming} formulation for interval analysis.
The resulting mathematical programming problems can then be solved to guaranteed global optimality by means of general purpose branch-and-bound type algorithms.

\section{Basics}
\label{s:basics}

\subsection{Notations}

$\mathbb{B} = \{ 0, 1 \}$ denotes the set of Boolean values.
The set of real numbers 
is denoted by $\R$. 
The complete linearly ordered set $\R \cup \{ \neginfty, \infty \}$
is denoted by $\CR$.
We call two vectors $x, y \in \CR^n$ \emph{comparable}
iff $x \leq y$ or $y \leq x$ holds.
For $f : X \to \CR^m$ with $X \subseteq \CR^n$, we set
$\dom(f) := \{ x \in X \mid f(x) \in \R^m \}$
and
$\fdom(f) := \dom(f) \cap \R^n$.
We denote the $i$-th row (resp.\ the $j$-th column) 
of a matrix $A$ by $A_{i\cdot}$ (resp.\ $A_{\cdot j}$).
Accordingly, 
$A_{i \cdot j}$ denotes the component in the
$i$-th row and the $j$-th column.
We also use this notation for vectors and mappings $f : X \to Y^k$.
 
Assume that a fixed set $\vX$ of variables 
and a domain $\D$ is given.
We consider equations of the form 
$\vx = e$,
where $\vx \in \vX$ is a variable
and $e$ is an expression over $\D$.
A \emph{system} $\E$ of (fixpoint) equations is a finite
set $\{ \vx_1 = e_1,\ldots,\vx_n = e_n \}$ 
of equations, 
where
$\vx_1,\ldots,\vx_n$ are pairwise distinct variables.
We denote the set $\{\varx_1,\ldots,\varx_n\}$ of variables 
occurring in $\E$ by $\vX_\E$.
We drop the subscript whenever it is clear from the context.

For a variable assignment $\rho : \vX \to \D$,
an expression $e$ is mapped to a value 
$\sem{e}\rho$ 
by setting 
$	  \sem{\vx}\rho := \rho(\vx) $
 and 
$  \sem{f(e_1,\ldots,e_k)}\rho := f(\sem{e_1}\rho,\ldots,\sem{e_k}\rho) $,
where $\vx \in \vX$, $f$ is a $k$-ary operator, for instance $+$, and
$e_1,\ldots,e_k$ are expressions.
Let $\E$ be a system of equations.
We define the unary operator 
$\sem{\E}$ 
on 
$\vX \to \D$
by setting
$ 
  (\sem{\E}\rho)(\vx) := \sem{e}\rho
$
for all 
$
  \vx = e \in \E
$.
A solution is a variable assignment $\rho$
such that $\rho = \sem{\E}\rho$ 
holds.
The set of solutions  
is denoted 
by $\Sol(\E)$.
 
Let $\D$ be a complete lattice.
We denote the \emph{least upper bound} and 
the \emph{greatest lower bound} of 
a set $X \subseteq \D$ by $\bigvee X$ and $\bigwedge X$, 
respectively.
The least element $\bigvee \emptyset$
(resp.\ the greatest element $\bigwedge \emptyset$)
is denoted by $\bot$ (resp.\ $\top$).
We define the binary operators $\vee$ and $\wedge$
by 
$x \vee y := \bigvee \{ x, y \}$
and
$x \wedge y := \bigwedge \{ x, y \}$ for all $x,y \in \D$,
respectively.
For $\Box \in \{ \vee, \wedge \}$,
we will also consider $x_1 \;\Box\; \cdots \;\Box\; x_k$
as the application of a $k$-ary operator.
This will cause no problems,
since the binary operators $\vee$ and $\wedge$ 
are associative and commutative.
An expression $e$ (resp.\ an equation $\vx = e$) is called 
\emph{monotone}
iff 
all operators occurring in $e$ are monotone.

The set $\vX \to \D$ of all \emph{variable assignments} 
is a complete lattice.
For $\rho, \rho' : \vX \to \D$,
we write $\rho \ll \rho'$ (resp.\ $\rho \gg \rho'$) 
iff 
$\rho(\vx) < \rho'(\vx)$ (resp.\ $\rho(\vx) > \rho'(\vx)$) holds for all $\vx \in \vX$.
For $d \in \D$,
$\underline d$ denotes the variable assignment 
$\{ \vx \mapsto d \mid \vx \in \vX \}$.
A variable assignment $\rho$ with $\botvar \ll \rho \ll \topvar$
is called \emph{finite}.
A pre-solution (resp.\ post-solution) is a variable assignment $\rho$
such that  
$\rho \leq \sem{\E}\rho$ (resp.\ $\rho \geq \sem{\E}\rho$) holds.
The set of all pre-solutions 
(resp.\ the set of all post-solutions) 
is denoted 
by $\PreSol(\E)$ (resp.\ $\PostSol(\E)$).
The least fixpoint (resp.\ the greatest fixpoint)
of an operator 
$f : \D \to \D$ 
is denoted by $\mu f$ (resp.\ $\nu f$),
provided that it exists.
Thus, the least solution (resp.\ the greatest solution)
of a system $\E$ of equations is denoted by
$\mu\sem\E$ (resp.\ $\nu\sem\E$),
provided that it exists.
For a pre-solution $\rho$ (resp.\ for a post-solution $\rho$),
$\mu_{\geq \rho}\sem\E$ (resp.\ $\nu_{\leq \rho}\sem\E$)
denotes the least solution that is greater than or equal to $\rho$
(resp.\ the greatest solution that is less than or equal to $\rho$).
From Knaster-Tarski's fixpoint theorem we get:
  Every system $\E$ of monotone equations over a complete lattice 
  has a least solution $\mu\sem\E$ and a greatest solution $\nu\sem\E$.
  Furthermore,
  $\mu\sem\E = \bigwedge \PostSol(\E)$
  and
  $\nu\sem\E = \bigvee \PreSol(\E)$.

\subsection{Linear Programming}

We consider linear programming problems (LP problems for short)
of the form 
$
  \sup \;
  \{ 
  c^\top  x \mid 
  x \in \R^n
  ,
  Ax \leq b
  \}
  ,
$
  where 
  $A \in \R^{m \times n}$,
  $b \in \R^m$, and
  $c \in \R^n$
  are the inputs.
  The convex closed polyhedron
 $
  \{ 
  x \in \R^n
  \mid
  Ax \leq b
  \}
  $
  is called the \emph{feasible space}.
  The LP problem is called \emph{infeasible}
  iff the feasible space is empty.
  An element of the feasible space,
  is called \emph{feasible solution}.
  A feasible solution $x$ that maximizes $c^\top x$ is called
  \emph{optimal solution}.

LP problems can be solved in polynomial time
through interior point methods \cite{Megiddo87,LP1}. 
Note, however, that the running-time then
crucially depends on the sizes of occurring numbers.
At the danger of an exponential running-time in contrived cases, we can also instead rely on
the simplex algorithm: its running-time is \emph{uniform},
i.e., independent of the sizes of occurring numbers
(given that arithmetic operations, comparison,
storage and retrieval for numbers are counted for ${\cal O}(1)$).

\subsection{SAT modulo real linear arithmetic}

The set of SAT modulo real linear arithmetic formulas $\Phi$ 
is defined through the 
grammar
%
  $e      ::= c \mid x \mid e_1 + e_2 \mid c \cdot e'$, 
  \;
  $\Phi ::= a \mid e_1 \leq e_2 \mid \Phi_1 \vee \Phi_2 \mid \Phi_1 \wedge \Phi_2 \mid \overline {\Phi'}$.
Here, $c \in \R$ is a constant, $x$ is a real valued variable, $e, e',e_1,e_2$ are real-valued linear expressions, 
$a$ is a Boolean variable and $\Phi, \Phi',\Phi_1,\Phi_2$ are formulas.
An \emph{interpretation} $I$ for a formula $\Phi$ is a mapping that assigns a real value to every real-valued variable and
a Boolean value to every Boolean variable.
We write $I \models \Phi$ for ``$I$ is a \emph{model} of $\Phi$'',
i.e.,
    $\sem{c} I = c$,
    $\sem{x} = I(x)$,
    $\sem{e_1 + e_2}I = \sem{e_1}I + \sem{e_2}I$, 
    $\sem{c \cdot e'}I = c \cdot \sem{e'}I$, and:
\begin{align*}
    I \models a &\iff I(a) = 1
    &
    I \models e_1 \leq e_2 &\iff \sem{e_1}I \leq \sem{e_2}I
    \\[-1mm]
    I \models \Phi_1 \vee \Phi_2 &\iff I \models \Phi_1 \text{ or } I \models \Phi_2
    &
    I \models \Phi_1 \wedge \Phi_2 &\iff I \models \Phi_1 \text{ and } I \models \Phi_2
    \\[-1mm]
    I \models \overline{\Phi'} &\iff I \not\models \Phi'
\end{align*}

\noindent
A formula is called \emph{satisfiable} iff it has a model.
The problem of deciding, whether or not a given SAT modulo real linear arithmetic formula is 
satisfiable, is NP-complete.
There nevertheless exist efficient solver implementations for this decision problem \cite{DBLP:conf/cav/DutertreM06}.

In order to simplify notations we also allow matrices, vectors, the operations $\geq, \break <, >, \neq, =$, 
and the Boolean constants $0$ and $1$ to occur. 

\subsection{Collecting and Abstract Semantics}

The programs that we consider in this article
use real-valued variables $x_1,\ldots,x_n$.
Accordingly, we denote by $x=(x_1,\ldots,x_n)^\top$ the vector of all program variables.
For simplicity, 
we only consider elementary statements of the form 
$x := Ax + b$, and $A x \leq b$,
where $A \in \R^{n \times n}$ (resp.\ $\R^{k \times n}$),
$b \in \R^n$ (resp.\ $\R^{k}$), and
$x  \in \R^n$ denotes the vector of all program variables.
Statements of the form $x := Ax + b$ are called \emph{(affine) assignments}.
Statements of the form $A x \leq b$ are called \emph{(affine) guards}.
Additionally,
we allow statements of the form 
$s_1;\cdots;s_k$ and $s_1 \mid \cdots \mid s_k$,
where $s_1,\ldots,s_k$ are statements.
The operator $;$ binds tighter than the
operator $\mid$, and we consider $;$ and $\mid$ to be right-associative,
i.e., $s_1 \mid s_2 \mid s_3$ stands for $s_1 \mid (s_2 \mid s_3)$,
and $s_1 ; s_2 ; s_3$ stands for $s_1 ; (s_2 ; s_3)$. 
The set of statements is denoted by $\Stmt$.
A statement of the form $s_1 \mid \cdots \mid s_k$,
where $s_i$ does not contain the operator $\mid$ for all $i = 1,\ldots,k$, is called 
\emph{merge-simple}.
A merge-simple statement $s$ that does not use the $\mid$ operator at all 
is called \emph{sequential}.
A statement is called \emph{elementary} iff
it neither contains the operator $\mid$ nor the operator $;$.

The \emph{collecting semantics} $\sem{s} : 2^{\R^n} \to 2^{\R^n}$ of 
a statement $s \in \Stmt$ is defined by
\begin{align*}
  \sem{x := Ax + b}X 
    &:= \{ Ax + b \mid x \in X \},
  &
  \sem{A x \leq b}X 
    &:= \{ x \in X \mid A x \leq b \},
  \\[-0.5mm]
  \sem{s_1;\cdots; s_k}  
    &:= \sem{s_k} \circ \cdots \circ \sem{s_1}
  &
  \sem{s_1 \mid\cdots\mid s_k} X
    & := \sem{s_1}X \cup \cdots \cup \sem{s_k}X
\end{align*}
for $X \subseteq \R^n$.
Note that the operators $;$ and $|$ are associative,
i.e., $\sem{(s_1;s_2);s_3} = \sem{s_1;(s_2;s_3)}$
and $\sem{(s_1 \mid s_2) \mid s_3} = \sem{s_1 \mid (s_2 \mid s_3)}$ 
hold for all statements $s_1,s_2,s_3$.

An \emph{(affine) program} $G$ is a triple $(N,E,\start)$,
where $N$ is a finite set of \emph{program points},
$E \subseteq N \times \Stmt \times N$ is a finite set of control-flow edges,
and $\start \in N$ is the \emph{start program point}.
As usual, 
the \emph{collecting semantics} $\Values$ of
a program $G = (N,E,\start)$ is the least solution of 
the following constraint system:
\begin{align*}
  \VALUES[\start]
    &\supseteq \R^n
  \qquad
  \VALUES[v] 
    \supseteq \sem{s} (\VALUES[u])
  \quad \text{for all } (u,s,v) \in E
\end{align*}
Here, the variables $\VALUES[v]$, $v \in N$ 
take values in $2^{\R^n}$.
The components of 
the collecting semantics $\Values$ 
are denoted by $\Values[v]$
for all
$v \in N$.

Let $\D$ be a complete lattice 
(for instance the complete lattice of all $n$-dimensional closed real intervals).
Let the partial order of $\D$ be denoted by $\leq$.
Assume that $\alpha : 2^{\R^n} \to \D$ 
and 
$\gamma : \D \to 2^{\R^n}$
form a Galois connection,
i.e.,
for all $X \subseteq \R^n$ and all $d \in \D$,
$\alpha(X) \leq d$ iff $X \subseteq \gamma(d)$.
The \emph{abstract semantics} $\sem{s}^\sharp : \D \to \D$ 
of a statement $s$
is defined by 
$ 
  \sem{s}^\sharp := \alpha \circ \sem{s} \circ \gamma
  .
$ 
The \emph{abstract semantics} $\Values^\sharp$ of an affine program
$G = (N,E,\start)$ is the least solution of the following 
constraint system:
\begin{align*}
  \VALUES^\sharp[\start]
    &\geq \alpha(\R^n)
  \qquad
  \VALUES^\sharp[v] 
    \geq \sem{s}^\sharp (\VALUES^\sharp[u])
  \quad \text{for all } (u,s,v) \in E
\end{align*}
Here, the variables $\VALUES^\sharp[v]$, $v \in N$ 
take values in $\D$.
The components of 
the abstract semantics $\Values^\sharp$ 
are denoted by $\Values^\sharp[v]$
for all $v \in N$.
The abstract semantics $\Values^\sharp$ safely over-approximates the collecting semantics $\Values$,
i.e., $\gamma(\Values^\sharp[v]) \supseteq \Values[v]$ for all $v \in N$.

\subsection{Using Cut-Sets to improve Precision}

Usually, only sequential statements (these statements correspond to \emph{basic blocks}) 
are allowed in control flow graphs.
However, given a cut-set $C$, one can systematically transform any control flow graph $G$
into an equivalent control flow graph $G'$ of our form 
(up to the fact that $G'$ has fewer program points than $G$)
with increased precision of the abstract semantics.
However, for the sake of simplicity, we do not discuss these aspects in detail.
Instead, we consider an example:

\tikzstyle{point}=[circle,draw,thick,inner sep=1pt,minimum size=2mm]

\begin{figure}
  \vspaces{-5mm}
  \centering
	\begin{tabular}{c@{\;}c}
	\scalebox{0.9}{
	\begin{tikzpicture}
		 \node (start) [point] {$\start$};
		 \node (n1) [below of = start,point,yshift=1mm]{$1$};
		 \node (n2) [below of = n1,point,yshift=3mm]{$2$};
		 \node (n3) [below of = n2,point,yshift=3mm]{$3$};
		 \node (n4) [left of = n2,point,xshift=-15mm]{$4$};
		 \node (n4a) [coordinate,left of = n3,xshift=-15mm]{};
		 \node (n4b) [coordinate,left of = n1,xshift=-15mm]{};
		 \node (n5) [right of = n2,point,xshift= 15mm]{$5$};
		 \node (n5a) [coordinate,right of = n3,xshift= 15mm]{};
		 \node (n5b) [coordinate,right of = n1,xshift= 15mm]{};
		 \path[->] (start) edge [] node [right,yshift=1mm] {$x_1 := 0$} (n1);
		 \path[->] (n1) edge [] node [right] {$x_1 \leq 1000$} (n2);
		 \path[->] (n2) edge [] node [right] {$x_2 := - x_1$} (n3);
		 \path[-] (n3) edge [] node [below] {$x_2 \leq -1$} (n4a);
		 \path[->] (n4a) edge [] node [left] {} (n4);
		 \path[-] (n4) edge [] node [left] {} (n4b);
		 \path[->] (n4b) edge [] node [above] {$x_1 := -2 x_1$} (n1);
		 \path[-] (n3) edge [] node [below] {$x_2 \geq 0$} (n5a);
		 \path[->] (n5a) edge [] node [left] {} (n5);
		 \path[-] (n5) edge [] node [left] {} (n5b);
		 \path[->] (n5b) edge [] node [above] {$x_1 := -x_1 + 1$} (n1);
	\end{tikzpicture}
	}
	&
	\scalebox{0.9}{
	\begin{tikzpicture}
		 \node (start) [point] {$\start$};
		 \node (n1) [below of = start,point,yshift=1mm]{$1$};
		 \node (n2) [coordinate,below of = n1,yshift=0mm]{$2$};
		 \node (n3) [coordinate,left of = n2]{$2$};
		 \node (n4) [coordinate,left of = n1]{$2$};
		 \path[->] (start) edge [] node [right] {$x_1 := 0$} (n1);
		 \path[-] (n1) edge [] node [right] 
		   {$
		     \begin{array}{l}
		       x_1 \leq 1000; x_2 := -x_1; \\
		       (x_2 \leq -1; x_1 := -2x_1 \mid x_2 \geq 0; x_1 := -x_1 + 1)
		     \end{array}
		     $} 
		   (n2);
		 \path[-] (n2) edge [] node [right] {} (n3);
		 \path[-] (n3) edge [] node [right] {} (n4);
		 \path[->] (n4) edge [] node [right] {} (n1);
	\end{tikzpicture}
	}
	\\[-2mm]
	(a) & (b)
	\end{tabular}
	\vspaces{-3mm}
	\caption{}
	\label{fig:run:ex:01}
	\vspaces{-8mm}
\end{figure}

\begin{example}[Using Cut-Sets to improve Precision]
  \label{ex:running:0}
  As a running example throughout the present article 
  we use the following C-code:
\begin{lstlisting}
int x_1, x_2; x_1 = 0; while (x_1 <= 1000) { x_2 = -x_1; 
  if (x_2 < 0) x_1 = -2 * x_1; else x_1 = -x_1 + 1; }
\end{lstlisting}
This C-code is abstracted through the affine program $G_1 = (N_1,E_1,\start)$
which is shown in Figure \ref{fig:run:ex:01}.(a).
However, it is unnecessary to apply abstraction at every program point;
it suffices to apply abstraction at a cut-set of $G_1$.
Since all loops contain program point $1$,
a cut-set of $G_1$ is $\{ 1 \}$.
Equivalent to applying abstraction only at program point $1$ is
to rewrite the control-flow graph w.r.t.\ the cut-set $\{ 1 \}$
into a control-flow graph $G$ equivalent w.r.t.\ the collecting semantic.
The result of this transformation is drawn in Figure \ref{fig:run:ex:01}.(b).
  This means:
  the affine program for the above C-code is
  $G = (N,E,\start)$, where
  $ 
    N = \{ \start, 1 \}
    ,
    E = \{ (\start, x_1 := 0, 1), (1, s, 1) \}
    ,
  $ 
  and
  \begin{align*}
    s'     &=  x_1 \leq 1000 ; x_2 := -x_1&
    s_1 &= x_2 \leq -1; x_1 := -2 x_1 \\
    s_2 &= -x_2 \leq 0; x_1 := -x_1+1 &
    s      &= s';  (s_1 \mid s_2 )
  \end{align*}
  Let $\Values_1$ denote the collecting semantics of $G_1$
  and $\Values$ denote the collecting semantics of $G$.
  $G_1$ and $G$ are equivalent in the following sense:
  $\Values[v] = \Values_1[v]$ holds for all program points $v \in N$.
  W.r.t.\ the abstract semantics, $G$ is, is we will see, strictly more precise than $G_1$.
  In general we at least have $\Values^\sharp[v] \subseteq \Values^\sharp_{1\cdot}[v]$ for all program points $v \in N$.
  This is independent of the abstract domain.\footnote{We assume that we have given a Galois-connection and thus in particular monotone best abstract transformers.}
\qed
\end{example}

\subsection{Template Linear Constraints}

In the present article we restrict our considerations to \emph{template linear constraint domains}~\citep{DBLP:conf/vmcai/SankaranarayananSM05}.
Assume that we are given a fixed 
\emph{template constraint matrix} $T \in \R^{m \times n}$.
The template linear constraint domain is  
$\CR^m$.
As shown by \citet{DBLP:conf/vmcai/SankaranarayananSM05},
the concretization $\gamma : \CR^m \to 2^{\R^n}$
and
the abstraction $\alpha : 2^{\R^n} \to \CR^m$,
which are defined by
\begin{align*}
  \gamma(d) &:= \{ x \in \R^n \mid Tx \leq d \} 
    && \forall d \in \CR^m
  ,
  \\
  \alpha(X) &:= \textstyle\bigwedge \{ d \in \CR^m \mid \gamma(d) \supseteq X \}
  && \forall X \subseteq \R^n
    ,
\end{align*}
form a Galois connection.
The template linear constraint domains contain \emph{intervals},
\emph{zones}, and \emph{octagons}, with appropriate choices of the
template constraint matrix \cite{DBLP:conf/vmcai/SankaranarayananSM05}.

%
In a first stage we restrict our considerations to sequential 
and merge-simple statements.
Even for these statements we 
avoid unnecessary imprecision,
if we abstract such statements en bloc instead of 
abstracting each elementary statement separately:

\begin{example}
  In this example we use the interval domain as abstract domain,
  i.e., our complete lattice consists of all $n$-dimensional closed real intervals.
  Our affine program will use $2$ variables, i.e., $n = 2$.
  The complete lattice of all $2$-dimensional closed real intervals can be specified 
  through the template constraint matrix
  $T = \begin{pmatrix} -I &\; I \end{pmatrix}^\top  \in \R^{4 \times 2}$,
  where $I$ denotes the identity matrix.
  %
  Consider the statements 
  %
    $s_1 = x_2 := x_1$, 
    $s_2 = x_1 := x_1 - x_2$, and
    $     s = s_1 ; s_2$
  and the abstract value $I = [0,1] \times \R$ (a $2$-dimensional closed real interval).
  The interval $I$ can w.r.t.\ $T$ be identified with the abstract value 
  $(0,\infty,1,\infty)^\top$.
  More generally,
  w.r.t.\ $T$ every $2$-dimensional closed real interval $[l_1,u_1] \times [l_2,u_2]$
  can be identified with the abstract value $(-l_1,-l_2,u_1,u_2)^\top$.
  If we abstract each elementary statement separately,
  then we in fact use
  $\sem{s_2}^\sharp \circ \sem{s_1}^\sharp$
  instead of $\sem{s}^\sharp$
  to abstract the collecting semantics $\sem s$ of the statement $s = s_1; s_2$.
  The following calculation shows that this can be important:
$ 
    \sem{s}^\sharp I
    =
    [0,0] \times [0,1]
    \neq
    [-1,1] \times [0,1]
    =
    \sem{s_2}^\sharp ([0,1] \times [0,1])
    =
    (\sem{s_2}^\sharp \circ \sem{s_1}^\sharp) I
    .
$ 
  The imprecision is caused by the additional abstraction.
  We lose 
  the information that the values of the program variables 
  $x_1$ and $x_2$ are equal after executing the first statement.
  \qed
\end{example}

\noindent
Another possibility for avoiding unnecessary imprecision in the above example would 
consist in adding additional rows to the template constraint matrix.
Although this works for the above example, it does not work in general,
since still only convex sets can be described,
but sometimes non-convex sets are required
(cf.\ with the example in the introduction).

Provided that $s$ is a merge-simple statement,
$\sem{s}^\sharp d$ 
can be computed in polynomial time through linear programming:

\begin{lemma}[Merge-Simple Statements]
\label{l:merge-simple:poly}
  Let $s$ be a merge-simple statement and $d \in \CR^m$.
  Then $\sem{s}^\sharp d$ can be computed in polynomial time
  through linear programming.
  \qed
\end{lemma}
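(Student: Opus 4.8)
The plan is to compute $\sem{s}^\sharp d$ componentwise. Recall that $\sem{s}^\sharp d = \alpha(\sem{s}(\gamma(d)))$, and that the $i$-th component of $\alpha(X)$ is $\sup\{ T_{i\cdot} x \mid x \in X \}$ (with the conventions that the sup over $\emptyset$ is $\neginfty$ and an unbounded sup is $\infty$). So for each of the $m$ rows of $T$, I need to evaluate a single optimization of a linear functional over the set $\sem{s}(\gamma(d))$, and the whole lemma reduces to showing that this set is, essentially, a polyhedron described by polynomially many linear constraints in polynomially many variables.

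First I would treat the sequential case $s = g_1; a_1; g_2; a_2; \cdots$, i.e.\ an alternating (in any order) composition of affine guards $A x \leq b$ and affine assignments $x := A x + b$. The key observation is that the image of a polyhedron under the collecting semantics of a sequential statement can be described by introducing one fresh vector of variables $y^{(0)}, y^{(1)}, \ldots, y^{(k)} \in \R^n$ for the program state before each elementary statement, with $y^{(0)}$ the input and $y^{(k)}$ the output: a guard $A x \leq b$ at step $j$ contributes the constraints $A y^{(j-1)} \leq b$ together with $y^{(j)} = y^{(j-1)}$, and an assignment $x := A x + b$ contributes $y^{(j)} = A y^{(j-1)} + b$. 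Prepending the constraints $T y^{(0)} \leq d$ coming from $x \in \gamma(d)$ (dropping rows with $d_i = \infty$), the projection onto $y^{(k)}$ is exactly $\sem{s}(\gamma(d))$. This is a linear system of size polynomial in the size of $s$ and $T$, so $\sup\{ T_{i\cdot} y^{(k)} \mid \text{constraints} \}$ is an LP; its optimum, by LP duality / the structure theory of polyhedra, is either $\infty$, a finite value, or the LP is infeasible, and in the last case we output $\neginfty$ for that component. Each LP is solved in polynomial time, and there are $m$ of them, so the sequential case is done.

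Next I would handle the merge-simple case $s = s_1 \mid \cdots \mid s_r$ with each $s_\ell$ sequential. Since $\sem{s}(X) = \bigcup_\ell \sem{s_\ell}(X)$, and $\sup$ distributes over unions, the $i$-th component of $\sem{s}^\sharp d$ is $\max_{\ell} \sup\{ T_{i\cdot} x \mid x \in \sem{s_\ell}(\gamma(d)) \}$, i.e.\ the maximum of the $r$ LP optima obtained from the sequential subcase (with the convention that a $\neginfty$ coming from an infeasible branch is simply dropped, and if \emph{all} branches are infeasible the result is $\neginfty$). This is still polynomially many polynomial-time LP calls. The main point to get right — and the only place where anything subtle happens — is the bookkeeping of the infinite values: one must argue that $\alpha(X)_i = \infty$ exactly when the corresponding LP is feasible and unbounded, and $=\neginfty$ exactly when it is infeasible, which follows directly from the definition of $\alpha$ via the Galois connection and the definition of $\gamma$. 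I do not expect a genuine obstacle here; the work is in stating the encoding of $\sem{s_\ell}(\gamma(d))$ as a projected polyhedron cleanly and checking that the $\neginfty/\infty$ cases are handled, after which the polynomial-time claim is immediate from polynomial-time LP solving.
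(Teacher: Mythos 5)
Your proof is correct and follows essentially the same route the paper relies on (see the LP construction in the section on computing greatest finite pre-solutions): compute each of the $m$ components of $\sem{s}^\sharp d$ as an LP over a polynomial-size polyhedral encoding of $\sem{s_\ell}(\gamma(d))$, and take the componentwise maximum over the sequential branches $s_\ell$; the paper normalizes each sequential branch to the form $Ax \leq b; x := A'x + b'$ in polynomial time instead of introducing per-step fresh variables $y^{(0)},\ldots,y^{(k)}$, but the two encodings are interchangeable. One small detail you left implicit: if some $d_i = \neginfty$ then $\gamma(d)=\emptyset$ and one should short-circuit and return $\neginftyvar$, since $T_{i\cdot}x \leq \neginfty$ is not expressible as an LP constraint (you handled the analogous $d_i = \infty$ case by dropping the row).
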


\noindent
However,
the situation for arbitrary statements is significantly more difficult,
since,
by reducing SAT to the corresponding decision problem,
we can show the following:

\begin{lemma}
\label{l:diamand:is:np:complete}
  The problem of deciding,
  whether or not,
  for a given template constraint matrix $T$,
  and a given statement $s$,
  $
    \sem{s}^\sharp \inftyvar > \neginftyvar
  $
  holds,
  is NP-complete.
\end{lemma}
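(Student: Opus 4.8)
The plan is to prove membership in NP and NP-hardness separately, both resting on the following elementary reformulation. Since $\gamma(\inftyvar) = \R^n$, we have $\sem{s}^\sharp\inftyvar = \alpha(\sem{s}(\R^n))$, and $\alpha(X) = \neginftyvar$ holds iff $X = \emptyset$: indeed $\alpha(\emptyset) = \bigwedge\CR^m = \neginftyvar$, whereas if $x_0 \in X$ then the $i$-th component of $\alpha(X)$ is bounded below by $T_{i\cdot}x_0 \in \R$, so every component of $\alpha(X)$ is strictly above $\neginfty$ as soon as $X$ is non-empty. Hence the decision problem is exactly ``is $\sem{s}(\R^n) \neq \emptyset$?''; in particular the answer does not depend on the particular matrix $T$.

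For membership in NP, I would first record that every $\sem{t}$ with $t \in \Stmt$ preserves arbitrary unions (immediate for the two kinds of elementary statements, since one is an image and the other an intersection with a fixed polyhedron; inherited through $;$ by composition and through $\mid$ by definition). A straightforward induction on the parse tree of $s$ then yields $\sem{s}(X) = \bigcup_{p} \sem{p}(X)$, where $p$ ranges over the \emph{paths} of $s$, i.e.\ the sequential statements obtained by selecting one alternative at each occurrence of $\mid$; a second easy induction shows that a path contains at most as many elementary statements as there are leaves in the parse tree of $s$, hence has size polynomial in that of $s$. A nondeterministic polynomial-time algorithm can therefore guess a path $p$, compute $\sem{p}^\sharp\inftyvar$ in polynomial time by Lemma~\ref{l:merge-simple:poly} (a sequential statement is merge-simple), and accept iff the result is $> \neginftyvar$; this is correct because $\sem{s}(\R^n) \neq \emptyset$ iff $\sem{p}(\R^n) \neq \emptyset$ for some path $p$.

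For NP-hardness I would reduce SAT. Given a CNF formula $\varphi$ over variables $y_1,\dots,y_k$, take $n := k$ program variables, let $T$ be the $n\times n$ identity matrix (its choice being irrelevant by the remark above), and put $s := t_1;\cdots;t_k;g$, where $t_i := (x_i := 0 \mid x_i := 1)$ — each alternative an affine assignment overwriting coordinate $i$ and leaving the others unchanged — and $g$ is the single affine guard $A x \le b$ whose rows encode the clauses of $\varphi$ in the standard way (a clause $\bigvee_{j\in P}y_j \vee \bigvee_{j\in Q}\neg y_j$ contributing the inequality $\sum_{j\in P}x_j + \sum_{j\in Q}(1-x_j)\ge 1$). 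Then $\sem{t_1;\cdots;t_k}(\R^n) = \{0,1\}^k$ and $\sem{g}(\{0,1\}^k)$ is precisely the set of satisfying assignments of $\varphi$, so $\sem{s}^\sharp\inftyvar > \neginftyvar$ iff $\varphi$ is satisfiable; the construction is obviously computable in polynomial time.

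I do not expect a genuine difficulty here. The only step that needs care is the NP-membership argument: one must verify that union-preservation of $\sem{\cdot}$ truly reduces the question to a single path, and that paths stay polynomially sized, so that the nondeterministic guess is short and Lemma~\ref{l:merge-simple:poly} applies to it. The hardness direction and the $T$-independence observation are routine.
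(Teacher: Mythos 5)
Your proof is correct, and its overall architecture matches the paper's: NP membership by nondeterministically guessing a path (equivalently, a $\vee$-strategy $\sigma$) and invoking Lemma~\ref{l:merge-simple:poly} on the resulting sequential statement, NP-hardness by a reduction from SAT. The two proofs differ in execution, though. Your key preliminary observation — that $\sem{s}^\sharp\inftyvar = \alpha(\sem{s}\R^n)$ and $\alpha(X) = \neginftyvar$ iff $X = \emptyset$, so the problem is exactly emptiness of $\sem{s}(\R^n)$, independently of $T$ — is made explicit and carries both halves of the argument. In particular it lets you avoid a step the paper needs: the paper nondeterministically guesses a component index $k$ \emph{and} a strategy $\sigma$ achieving the $k$-th component of $\sem{s}^\sharp\inftyvar$, whereas your reformulation shows all components of $\sem{p}^\sharp\inftyvar$ are simultaneously $> \neginfty$ or all $= \neginfty$, so guessing the path alone suffices. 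For hardness, the paper reduces from general SAT (in negation normal form) by a recursive translation in which $\wedge$ becomes $;$, $\vee$ becomes $\mid$, and literals become equality guards, then argues over the interval domain; you reduce from CNF-SAT with a flat construction (nondeterministically set each $x_i \in \{0,1\}$, then apply a single clause guard) over the identity template, which your $T$-independence remark licenses. Both reductions are routine and polynomial; the paper's is slightly more general in that it does not presuppose CNF, but this is inessential since CNF-SAT is NP-hard. No gap.
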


\noindent
Before proving the above lemma,
we introduce $\vee$-strategies for statements as follows:

\begin{definition}[$\vee$-Strategies for Statements]
  A $\vee$-strategy $\sigma$ for a statement $s$ is a function
  that maps every position of a $\mid$-statement, 
  (a statement of the form $s_0 \mid s_1$)
  within $s$
  to $0$ or $1$.
  The application $s \sigma$ of a $\vee$-strategy $\sigma$ to a statement $s$ is
  inductively defined by
$ 
    s \sigma = s$,
    $(s_0 \mid s_1) \sigma = s_{\sigma(\pos(s_0 \mid s_1))} \sigma$, and
    $(s_0 ; s_1) \sigma = (s_0 \sigma ; s_1 \sigma)
$, 
  where $s$ is an elementary statement, and $s_0,s_1$ are arbitrary statements.
  For all occurrences $s'$, $\pos(s')$ denotes the position of $s'$,
  i.e., $\pos(s')$ identifies the occurrence.
  \qed
\end{definition}

\begin{proof}
  Firstly,
  we show containment in $\mathrm{NP}$.
  Assume 
  $\sem{s}^\sharp \inftyvar > \neginftyvar$.
  There exists some $k$ such that the $k$-th component of 
  $\sem{s}^\sharp \inftyvar$
  is greater than $\neginfty$.
  We choose $k$ non-deterministically.
  There exists a $\vee$-strategy $\sigma$ for $s$ such that 
  the $k$-th component of
  $\sem{s \sigma}^\sharp \inftyvar$
  equals the $k$-th component of 
  $\sem{s}^\sharp \inftyvar$.
  We choose such a $\vee$-strategy non-deterministically.
  By Lemma \ref{l:merge-simple:poly},
  we can check in polynomial time, 
  whether the $k$-th component of
  $\sem{s \sigma}^\sharp \inftyvar$
  is greater than $\neginfty$.
  If this is fulfilled,
  we accept.
  
  In order to show $\mathrm{NP}$-hardness,
  we reduce the NP-hard problem SAT to our problem.
  Let 
  $\Phi$ 
  be a propositional formula with $n$ variables.
  W.l.o.g.\ we assume that $\Phi$ is in normal form,
  i.e.,
  there are no negated sub-formulas that contain $\wedge$ or $\vee$.
  We define the statement $s(\Phi)$ that uses the variables of $\Phi$ as program variables inductively by
    $s(z) := z = 1$, 
    $s(\overline z) := z = 0$,
    $s(\Phi_1 \wedge \Phi_2) := s(\Phi_1) ; s(\Phi_2)$, and 
    $s(\Phi_1 \vee \Phi_2) := s(\Phi_1) \mid s(\Phi_2)$,
  where $z$ is a variable of $\Phi$, and $\Phi_1, \Phi_2$ are formulas.
  Here, the statement $Ax = b$ is an abbreviation for 
  the statement 
  $Ax \leq b; -Ax \leq -b$.
  The formula
  $\Phi$
  is satisfiable iff 
  $\sem{s(\Phi)} \R^{n} \neq \emptyset$ holds.
  Moreover,
  even if we just use the interval domain,
  $\sem{s(\Phi)}\R^{n} \neq \emptyset$ 
  holds iff
  $\sem{s(\Phi)}^\sharp \inftyvar > \neginftyvar$ holds. 
  Thus,
  $\Phi$
  is satisfiable iff 
  $\sem{s(\Phi)}^\sharp \inftyvar > \neginftyvar$ holds. 
  \qed
\end{proof}

\noindent
Obviously, 
$\sem{(s_1 \mid s_2) ; s} = \sem{s_1; s \mid s_2 ; s}$
and 
$\sem{s ; (s_1 \mid s_2)} = \sem{s; s_1 \mid s ; s_2}$
for all statements $s, s_1, s_2$.
We can transform any statement $s$
into an equivalent merge-simple statement $s'$ using these rules.
We denote the merge-simple statement $s'$ that is obtained 
from an arbitrary statement $s$
by applying the above rules in some canonical way by $[s]$.
Intuitively, 
$[s]$ is an explicit enumeration of all paths through the statement $s$.


\begin{lemma}
  \label{l:into:merge-simple}
  For every statement $s$, 
  $[s]$ is merge-simple, and $\sem s = \sem{[s]}$.
  The size of $[s]$ is at most exponential in the size of $s$.
  \qed
\end{lemma}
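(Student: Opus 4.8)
The plan is to establish the three assertions of the lemma — that $[s]$ is merge-simple, that $\sem{s}=\sem{[s]}$, and that the size of $[s]$ is at most exponential in the size of $s$ — simultaneously by structural induction on $s$, after first fixing a precise recursive definition of the ``canonical'' transformation $[\cdot]$. Concretely, I would set $[s]:=s$ when $s$ is elementary; $[s_1\mid s_2]:=[s_1]\mid[s_2]$; and, writing the (by induction hypothesis merge-simple) statements $[s_1]$ and $[s_2]$ as $t_1\mid\cdots\mid t_p$ and $u_1\mid\cdots\mid u_q$ with all $t_i$ and $u_j$ sequential, defining $[s_1;s_2]$ to be the $\mid$-combination of all statements $t_i;u_j$, taken in lexicographic order of $(i,j)$. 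Right-associativity and associativity of $;$ and $\mid$ make it enough to treat these binary cases. Merge-simplicity of $[s]$ is then immediate from the construction: in the $\mid$-case, a $\mid$-combination of two merge-simple statements is merge-simple, and in the $;$-case each alternative $t_i;u_j$ is a composition of two sequential statements and hence sequential.

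For the semantic equivalence I would induct again. The elementary case is trivial, and the $\mid$-case follows from $\sem{s_1\mid s_2}X=\sem{s_1}X\cup\sem{s_2}X$ together with the induction hypothesis. For $s=s_1;s_2$, note first that $\sem{[s_1];[s_2]}=\sem{[s_2]}\circ\sem{[s_1]}=\sem{s_2}\circ\sem{s_1}=\sem{s_1;s_2}$ by the induction hypothesis, so it only remains to see that flattening does not change the semantics, i.e. that $\sem{(t_1\mid\cdots\mid t_p);(u_1\mid\cdots\mid u_q)}$ coincides with the semantics of the $\mid$-combination of the $t_i;u_j$. This follows by repeatedly applying the two distributivity identities $\sem{(a\mid b);c}=\sem{a;c\mid b;c}$ and $\sem{a;(b\mid c)}=\sem{a;b\mid a;c}$ stated just before the lemma, together with associativity and commutativity of $\mid$ and associativity of $;$.

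For the size bound I would introduce the auxiliary quantity $\pi(s)$, the number of $\mid$-alternatives of $[s]$ (intuitively, the number of paths through $s$), and prove $\pi(s)\le 2^{|s|}$ by induction: $\pi(s)=1\le 2^{|s|}$ for elementary $s$; $\pi(s_1\mid s_2)=\pi(s_1)+\pi(s_2)\le 2^{|s_1|}+2^{|s_2|}\le 2^{|s_1|+|s_2|+1}\le 2^{|s_1\mid s_2|}$; and $\pi(s_1;s_2)=\pi(s_1)\cdot\pi(s_2)\le 2^{|s_1|+|s_2|}\le 2^{|s_1;s_2|}$, the last step using $|s_1;s_2|\ge|s_1|+|s_2|$. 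Since each alternative of $[s]$ is a sequential statement whose elementary constituents all occur in $s$, its length is $O(|s|)$, whence $|[s]|=O(|s|)\cdot\pi(s)=2^{O(|s|)}$. The proof is essentially bookkeeping; the only points requiring care are pinning down ``in some canonical way'' so that $[\cdot]$ is single-valued, and choosing the size measure so that the multiplicative blow-up at sequential composition stays within a single exponential — which is precisely why the path count $\pi$ is tracked separately from the total size.
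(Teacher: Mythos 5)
Your proposal is correct and is exactly the formalization the paper gestures at: the paper gives no explicit proof (it states the lemma with an immediate \qed, relying on the distributivity identities stated just before and the remark that ``$[s]$ is an explicit enumeration of all paths''), and your structural induction with the path-count $\pi(s)$ is the natural way to fill in those details. The one place worth being a little pedantic, which you already flag, is the size measure: you need $|s_1;s_2|\ge|s_1|+|s_2|$ and $|s_1\mid s_2|\ge|s_1|+|s_2|$ (with $|s|\ge 1$ for elementary $s$), which any reasonable syntactic size gives; and you need that each sequential alternative of $[s]$ is built from elementary statements drawn from disjoint occurrences in $s$, so its length is bounded by $|s|$ rather than merely polynomial in $|s|$ — your inductive construction makes this visible since the alternatives of $[s_1;s_2]$ are exactly the concatenations $t_i;u_j$.
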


\noindent
However, in the worst case, 
the size of $[s]$ is
exponential in the size of $s$.
For the statement 
$s = (s_1^{(1)} \mid s_1^{(2)}) ; \cdots ; (s_k^{(1)} \mid s_k^{(2)})$ ,
for instance,
we get 
$
  [s] = |_{(a_1,\ldots,a_k) \in \{1,2\}^k} \allowbreak s_1^{(a_1)} ; \cdots ; s_k^{(a_k)}
  .
$
After replacing all statements $s$ with $[s]$
it is in principle possible to use the methods of \citet{DBLP:conf/csl/GawlitzaS07}
in order to compute the abstract semantics $\Values^\sharp$ precisely.
Because of the exponential blowup, however, 
this method would be impractical in most cases.
\footnote{
Note that we cannot expect a polynomial-time algorithm,
because of Lemma \ref{l:diamand:is:np:complete}: even without
loops, abstract reachability is NP-hard.
Even if all statements are merge-simple,
we cannot expect a polynomial-time algorithm,
since the problem of computing the winning regions 
of parity games 
is polynomial-time reducible to abstract reachability
\cite{DBLP:conf/fm/GawlitzaS08}.}

Our new method that we are going to present 
avoids this exponential blowup:
instead of enumerating all program paths, we shall visit them only as needed.
Guided by a SAT modulo real linear arithmetic solver, 
our method selects a path through $s$ only when it is \emph{locally profitable} in some sense.
In the worst case, an exponential number of paths may be visited
(Section \ref{s:upper}); but one can hope that this does not happen in many
practical cases, in the same way that SAT and SMT solving perform well on
many practical cases even though they in principle may visit an exponential number
of cases.

\subsection{Abstract Semantic Equations}

The first step of our method consists of rewriting our program analysis problem 
into a \emph{system of abstract semantic equations} that is interpreted over the reals. 
For that,
let $G = (N,E,\start)$ be an affine program
and $\Values^\sharp$ its abstract semantics.
We define the system $\C(G)$ of \emph{abstract semantic inequalities} to be the 
smallest set of inequalities that fulfills the following constraints:
\begin{compactitem}
  \item
    $\C$ contains the inequality
    $\vx_{\start,i} \geq \alpha_{i\cdot}(\R^n)$
    for every $i \in \{1,\ldots,m\}$.
  \item
    $\C$ contains the inequality 
    $\vx_{v,i} \geq \sem{s}^\sharp_{i\cdot}  (\vx_{u,1},\ldots,\vx_{u,m})$
    for every control-flow edge $(u,s,v) \in E$ and every $i \in \{1,\ldots,m\}$.
\end{compactitem}

\noindent
We define the system $\E(G)$ of 
\emph{abstract semantic equations} by
$
  \E(G) := \E(\C(G))
$.
Here, for a system 
$ 
  \C' = \{ \vx_1 \geq e_{1,1}, \ldots, \vx_1 \geq e_{1,k_1},
   \ldots,  
    \vx_n \geq e_{n,1}, \ldots, \vx_n \geq e_{n,k_n} \}
$ 
of inequalities, 
$\E(\C')$ is the system 
$ 
  \E(\C') = \{ \vx_1 = e_{1,1} \vee\cdots\vee e_{1,k_1}, \ldots,  \vx_n = e_{n,1} \vee\cdots\vee e_{n,k_n} \}
$ 
of equations.
The system $\E(G)$ of abstract semantic equations captures the abstract semantics $\Values^\sharp$ of $G$:

\begin{lemma}
  \label{l:eqs:abs_sem}
  $(\Values^\sharp[v])_{i \cdot} = \mu\sem{\E(G)}(\vx_{v,i})$
  for all program points $v$,
   $i \in \{ 1,\ldots,m \}$.
  \qed
\end{lemma}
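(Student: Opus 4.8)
The statement to prove, Lemma~\ref{l:eqs:abs_sem}, asserts that the least solution of the system $\E(G)$ of abstract semantic equations, read componentwise, coincides with the components of the abstract semantics $\Values^\sharp$ of the program $G$. The natural route is to compare the two fixpoint characterizations directly. On the one hand, $\Values^\sharp$ is by definition the least solution of the constraint system with variables $\VALUES^\sharp[v] \in \D$, $v \in N$, namely $\VALUES^\sharp[\start] \geq \alpha(\R^n)$ and $\VALUES^\sharp[v] \geq \sem{s}^\sharp(\VALUES^\sharp[u])$ for $(u,s,v) \in E$. On the other hand, $\mu\sem{\E(G)}$ is the least solution of the system of scalar equations over $\CR$ obtained by splitting each $\D = \CR^m$-valued constraint into its $m$ rows and then joining, for each variable $\vx_{v,i}$, all right-hand sides that constrain it via $\vee$. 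The plan is to show that these two least solutions are in bijective correspondence under the identification of a map $\rho : \{\vx_{v,i}\} \to \CR$ with the family $(d_v)_{v \in N}$ where $(d_v)_i = \rho(\vx_{v,i})$.

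**Key steps.**
First I would record the basic lattice fact (used implicitly throughout the paper) that $\D = \CR^m$ ordered componentwise is a complete lattice, that joins in $\D$ are computed componentwise, and hence that a family $(d_v)_v$ is a post-solution (resp.\ pre-solution, solution) of the $\D$-valued constraint system if and only if the associated scalar assignment $\rho$ is a post-solution (resp.\ pre-solution, solution) of $\C(G)$, which in turn is equivalent to being one for $\E(G)$. The crucial observation here is that passing from the system of inequalities $\C(G)$ to the system of equations $\E(\C(G))$ does not change the set of \emph{least} solutions: this is a standard consequence of Knaster--Tarski, since $\bigwedge \PostSol(\C') = \bigwedge \PostSol(\E(\C'))$ when $\E(\C')$ replaces a collection of lower bounds $\vx \geq e_{j}$ by the single equation $\vx = \bigvee_j e_j$, and both $\C'$ and $\E(\C')$ have monotone right-hand sides (all operators occurring are $\vee$ and the abstract transformers $\sem{s}^\sharp$, which are monotone because $\alpha,\gamma$ form a Galois connection). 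Second, I would invoke the Knaster--Tarski characterization quoted in the excerpt, $\mu\sem\E = \bigwedge \PostSol(\E)$, on both sides: since the post-solutions correspond under the identification, so do their greatest lower bounds, and therefore $\mu\sem{\E(G)}$ corresponds to the least solution $\Values^\sharp$ of the $\D$-valued system. Reading off the $i$-th component of the $v$-entry gives exactly $(\Values^\sharp[v])_{i\cdot} = \mu\sem{\E(G)}(\vx_{v,i})$.

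**Main obstacle.**
The only genuinely delicate point is the claim that splitting a $\D$-valued constraint $\VALUES^\sharp[v] \geq \sem{s}^\sharp(\VALUES^\sharp[u])$ into its $m$ scalar rows $\vx_{v,i} \geq \sem{s}^\sharp_{i\cdot}(\vx_{u,1},\dots,\vx_{u,m})$ is faithful — i.e.\ that the componentwise order on $\CR^m$ is exactly captured and that $\sem{s}^\sharp$ acts ``row by row'' in the sense needed. This is immediate from $\CR^m$ carrying the product order and from the notational convention introduced in the excerpt for $f : X \to Y^k$, but it should be stated carefully so that the reduction of the $\D$-level constraint system to the scalar system $\C(G)$ is unambiguous. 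A secondary, purely bookkeeping, obstacle is the initialization: one must check that $\alpha(\R^n) \in \CR^m$ splits into the $m$ constraints $\vx_{\start,i} \geq \alpha_{i\cdot}(\R^n)$, which again is just the product-order observation. Everything else is a routine transfer of the Knaster--Tarski argument across the identification, so I expect the proof to be short once the componentwise correspondence is spelled out.
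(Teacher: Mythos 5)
The paper states Lemma~\ref{l:eqs:abs_sem} without proof, placing \qed immediately after the statement and so treating it as immediate from the definitions. Your argument supplies exactly the expected details and is correct: the order isomorphism between $\CR^m$-valued assignments $(d_v)_{v\in N}$ and scalar assignments via $\rho(\vx_{v,i}) = (d_v)_{i\cdot}$, the observation that satisfying assignments of $\C(G)$ coincide with post-solutions of $\E(G)$ because $a \geq \bigvee_j e_j$ iff $a \geq e_j$ for all $j$, monotonicity of each $\sem{s}^\sharp = \alpha\circ\sem{s}\circ\gamma$ from the Galois connection, and Knaster--Tarski ($\mu\sem\E = \bigwedge\PostSol(\E)$) applied on both sides of the identification.
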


\begin{example}[Abstract Semantic Equations]
  \label{ex:running:1}
  We again consider the program $G$ of Example \ref{ex:running:0}.
  Assume that the template constraint matrix $T \in \R^{2 \times 2}$ is given by
  $T_{1 \cdot} = (1, 0)$
  and 
  $T_{2 \cdot} = (-1, 0)$.
  %
  Let $\Values^\sharp$ denote the abstract semantics of $G$.
  Then $\Values^\sharp[1] = (2001,2000)^\top$.
  $\E(G)$ consists of the following abstract semantic equations:
  \begin{align*}
    \vx_{\start,1} &= \infty &
    \vx_{1,1} &= \sem{x_1 := 0}^\sharp_{1\cdot} (\vx_{\start,1},\vx_{\start,2}) \vee \sem{s}^\sharp_{1\cdot} (\vx_{1,1},\vx_{1,2}) \\[-1mm]
    \vx_{\start,2} &= \infty &
    \vx_{1,2} &= \sem{x_1 := 0}^\sharp_{2\cdot} (\vx_{\start,1},\vx_{\start,2}) \vee \sem{s}^\sharp_{2\cdot} (\vx_{1,1},\vx_{1,2}) 
  \end{align*}
  
  \noindent
  As stated by Lemma \ref{l:eqs:abs_sem},
  we have
  $(\Values^\sharp[1])_{1\cdot} = \mu\sem{\E(G)}(\vx_{1,1}) = 2001$, and 
    $(\Values^\sharp[1])_{2\cdot} = \mu\sem{\E(G)}(\vx_{1,2}) = 2000$.
    \qed
\end{example}

\section{A Lower Bound on the Complexity}
\label{s:lower}

In this section we show that the problem of computing abstract semantics of affine programs
w.r.t.\ the interval domain is $\Pi^p_2$-hard.
$\Pi^p_2$-hard problems are conjectured to be harder than both $\mathrm{NP}$-complete and co-$\mathrm{NP}$-complete problems.
For further information regarding the polynomial-time hierarchy see
e.g.\ \citet{Stockmeyer76}.

\begin{theorem}
  \label{t:lower}
  The problem of deciding,
  whether,
  for a given program $G$, 
  a given template constraint matrix $T$,
  and a given program point $v$,
  $\Values^\sharp[v] > \neginftyvar$ holds,
  is $\Pi^p_2$-hard.
\end{theorem}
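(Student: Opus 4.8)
The plan is to reduce from a canonical $\Pi^p_2$-complete problem, namely validity of a quantified Boolean formula of the form $\forall y_1 \cdots \forall y_p \, \exists z_1 \cdots \exists z_q \, \psi(y,z)$ with $\psi$ in, say, 3-DNF (or any propositional normal form compatible with the construction in the proof of Lemma~\ref{l:diamand:is:np:complete}). Given such a QBF $\Psi$, I would build an affine program $G$, a template matrix $T$, and a program point $v$ such that $\Values^\sharp[v] > \neginftyvar$ holds if and only if $\Psi$ is valid. The inner existential part is handled exactly as in Lemma~\ref{l:diamand:is:np:complete}: the statement $s(\psi)$ built from $\psi$ (using $\mid$ for $\vee$, $;$ for $\wedge$, and affine guards $z=1$, $z=0$ for literals) has $\sem{s(\psi)}X \neq \emptyset$ for an input set $X$ iff $X$ contains a point whose $z$-coordinates encode a satisfying assignment of $\psi$ for the $y$-values fixed in $X$. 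The universal quantification over $y$ is then encoded as a loop that ranges over all $2^p$ Boolean assignments to $y_1,\ldots,y_p$, using a single auxiliary counter variable $x$; the program point $v$ is reached with a finite template value iff the loop completes for \emph{every} $y$-assignment, i.e.\ iff the inner formula is satisfiable for all $y$.

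Concretely, I would introduce one variable $x$ acting as a binary counter for the tuple $(y_1,\ldots,y_p) \in \{0,1\}^p$ and program variables $y_1,\ldots,y_p$ plus $z_1,\ldots,z_q$ as in the SAT reduction. The loop body would: (i) decode the current value of $x$ into the bits $y_1,\ldots,y_p$ via affine guards and assignments (a merge-simple fragment branching on each bit); (ii) run $s(\psi)$ on the current $y$-assignment with the $z$'s chosen nondeterministically via $\mid$-branches — this branch ``succeeds'' (is non-empty) iff $\exists z\,\psi(y,z)$; (iii) if it succeeds, increment $x$ and loop back, otherwise there is no outgoing edge, so that particular $y$-assignment stalls and the loop never reaches $v$. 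After $x$ exceeds $2^p - 1$, an edge leads to $v$. One arranges $T$ so that $\Values^\sharp[v]$ tracks, say, an upper bound on $x$: it is finite (in fact bounded by $2^p$) precisely when every iteration succeeded, and it is $\neginftyvar$ (the abstract ``unreachable'' value) if some $y$-assignment killed the computation. The key point emphasized in the contribution section — that $\Pi^p_2$-hardness already holds when the invariant is a single inequality $x \le C$ — is obtained by using exactly one template row $x \le C$, so I would make sure the counter $x$ is the only thing the template observes.

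The correctness argument splits into two directions. For soundness of the reduction (``$\Psi$ valid $\Rightarrow$ $\Values^\sharp[v]$ finite''), I use that the abstract semantics over-approximates the collecting semantics and, conversely, that on merge-simple statements the abstract transformer is \emph{exact} on the relevant interval inputs (this is the content behind Lemma~\ref{l:merge-simple:poly} and the interval-exactness remark in the proof of Lemma~\ref{l:diamand:is:np:complete}): since every $y$ yields a non-empty concrete run, the least fixpoint assigns a real (non-$\neginfty$) value at $v$, bounded by the counter range. For completeness (``$\Psi$ invalid $\Rightarrow$ $\Values^\sharp[v] = \neginftyvar$''), if some $y^\ast$ has no satisfying $z$, then $\sem{s(\psi)}$ applied to the singleton (or interval) encoding $y^\ast$ is empty, hence the abstract value after that branch is $\neginftyvar$, the loop cannot advance past $x = $ (the code of $y^\ast$), and no concrete or abstract value ever propagates to $v$. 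The main obstacle I anticipate is the interface between the ``exact'' inner block and the loop: I must ensure that decoding $x$ into the $y$-bits produces, at the abstract level, a value tight enough that the subsequent $s(\psi)$ block behaves like its concrete counterpart — i.e.\ that the interval (or template) abstraction of the current counter value pins down the $y_i$ to actual $0/1$ values rather than a fuzzy range. This is handled by running the bit-decoding and $s(\psi)$ together as a single en-bloc statement on each edge (exactly the modeling freedom the paper's framework provides), so that no abstraction is interposed inside an iteration; abstraction is applied only at the loop head, whose cut-set role is precisely what makes the argument go through.
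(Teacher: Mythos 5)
Your reduction is essentially the same as the paper's: reduce from the validity of a $\forall\exists$ quantified Boolean formula, encode the universal block as a counter loop whose body (taken en bloc) decodes the counter into the universal bits via affine guards, runs the SAT gadget $s(\Phi')$ from Lemma~\ref{l:diamand:is:np:complete} with the existential choices made through $\mid$-branches, and then increments the counter, so that the exit node $v$ is reachable iff every counter value admits a satisfying completion of the inner formula. Two small remarks are in order. First, you propose to observe only a single template row $x\le C$, which is indeed what the abstract of the paper claims, whereas the paper's proof of Theorem~\ref{t:lower} simply instantiates the interval domain; the single-row variant does go through, but you should explicitly argue that the loop-head concretization $\{x\le d\}$, which contains arbitrarily negative $x$, cannot leak past the first failing counter value --- those extraneous points can only decode to the all-zero bit pattern (every guard $x'\ge 2^{i-1}$ fails, every $x'\le 2^{i-1}-1$ holds) and hence re-increment to values $\le 1$, so the fixpoint still caps at the smallest bad assignment. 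Second, your claim that $\Values^\sharp[v]$ is \emph{finite} and bounded by $2^p$ on the valid side is not quite right as stated: in the paper's program the loop keeps incrementing $x$ indefinitely once it passes $2^n-1$ (the decoder simply saturates at the all-ones pattern), so the fixpoint at $v$ is $\inftyvar$, not a finite value. This is harmless because the reduction only needs the dichotomy $\Values^\sharp[v]>\neginftyvar$ versus $\Values^\sharp[v]=\neginftyvar$, but the parenthetical claim should be dropped or the program amended with a guard bounding $x$.
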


\begin{proof}
  We reduce the $\Pi^p_2$-complete problem of
  deciding the truth of a $\forall\exists$ propositional formula
  \citep{DBLP:journals/tcs/Wrathall76}
  to our problem. 
  Let 
  $\Phi = \forall x_1 , \ldots  , x_n . \exists y_1, \ldots, y_m .\Phi'$
  be a formula without free variables, 
  where $\Phi'$ is a propositional formula.
  We consider the affine program $G = (N,E,\start)$, 
  with program variables 
  $x,x', \allowbreak x_1, \ldots, x_n, \allowbreak
   y_1, \ldots, y_m$,
  where
  $N = \{ \start, 1, 2 \}$, 
  and
  $E = \{ (\start, x := 0, 1), \allowbreak (1,s,1), \allowbreak (1,x \geq 2^{n},2) \}$
  with
  \begin{align*}
    s
    \;=\; 
    x' := x; 
    &
    \;( x' \geq 2^{n-1}; x' := x' - 2^{n-1}; x_n := 1 \mid x' \leq 2^{n - 1} - 1; x_n := 0 ); \cdots \\
    &\; 
    ( x' \geq 2^{1-1}; x' := x' - 2^{1-1}; x_1 := 1  \mid x' \leq 2^{1-1} - 1; x_1 := 0 ); \\
    &\; 
    s(\Phi') ; \;
    x := x + 1 
  \end{align*}
  The statement $s(\Phi')$ is defined as in the proof of Lemma \ref{l:diamand:is:np:complete}.

  In intuitive terms: this program initializes $x$ to $0$. Then, it enters a loop: 
  it computes into $x_1,\dots,x_n$ the binary decomposition of $x$, 
  then it attempts to nondeterministically choose $y_1,\dots,y_m$ so that $\phi'$ is true. 
  If this is possible, it increments $x$ by one and loops.
  Otherwise, it just loops.
  Thus, there is a terminating computations iff $\Phi$ holds.
    
  Then 
  $\Phi$ 
  holds iff
  $\Values[2] \neq \emptyset$.
  For the abstraction,
  we consider the interval domain.
  By considering the Kleene-Iteration,
  it is easy to see that 
  $\Values[2] \neq \emptyset$ holds iff 
  $\Values^\sharp[2] > \neginftyvar$ holds.
  Thus $\Phi$ holds iff $\Values^\sharp[2] > \neginftyvar$ holds.
  \qed
\end{proof}

\section{Determining Improved Strategies} 
\label{s:improve}

In this section we develop a method for computing local improvements of strategies through solving SAT modulo real linear arithmetic formulas.

In order to decide,
whether or not,
for a given statement $s$,
a given $j \in \{1,\ldots,m\}$, 
a given $c$, 
and a given $d \in \CR^m$,
$\sem{s}^\sharp_{j\cdot} d > c$ 
holds,
we construct the following SAT modulo real linear arithmetic formula
(we use existential quantifiers 
 to improve readability):
\begin{align*}
  \Phi(s,d,j,c)
  &:\equiv
  \exists v \in \R \;.\; \Phi(s,d,j) \wedge v > c
\\
  \Phi(s,d,j)
  &:\equiv
  \exists
  x \in \R^n,
  x' \in \R^n
  \;.\;
  T x \leq d \wedge \Phi(s) \wedge v = T_{j\cdot} x' %
\end{align*}

\noindent
Here, $\Phi(s)$ is a formula that relates every $x \in \R^n$ with all elements from the set $\sem{s} \{ x \}$.
It is defined inductively over the structure of $s$ as follows:
\begin{align*}
  \Phi(x := Ax + b) 
    &:\equiv x' = Ax + b
    \\
  \Phi(Ax \leq b) 
    &:\equiv Ax \leq b \wedge x' = x
    \\
  \Phi(s_1 ; s_2)
    &:\equiv \exists x'' \in \R^n \; . \; \Phi(s_1)[x''/x'] \wedge \Phi(s_2)[x''/x]
    \\
  \Phi(s_1 \mid s_2)
    &:\equiv (\overline {a_{\pos(s_1 \mid s_2)}} \wedge \Phi(s_1)) 
        \vee 
        ( a_{\pos(s_1 \mid s_2)} \wedge \Phi(s_2))
\end{align*}

\noindent
Here,
for every position $p$ of a subexpression of $s$,
$a_{p}$ is a Boolean variable.
Let $\Pos_\mid (s)$ denote the set of all positions of $\mid$-sub\-ex\-pres\-sions of $s$.
The set of free variables of the formula 
$\Phi(s)$ is $\{ x, x' \} \cup \{a_p \mid p \in \Pos_\mid(s) \}$.
A valuation for the variables from the set  $\{a_p \mid p \in \Pos_\mid(s) \}$
describes a path through $s$.
We have:

\begin{lemma}
  \label{l:smt}
  $\sem{s}^\sharp_{j\cdot} d > c$ holds
  iff
  $\Phi(s,d,j,c)$
  is satisfiable.
  \qed
\end{lemma}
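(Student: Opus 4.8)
I want to show that $\sem{s}^\sharp_{j\cdot} d > c$ holds iff $\Phi(s,d,j,c)$ is satisfiable. The key is to first pin down precisely what $\Phi(s)$ computes: I claim that for any interpretation $I$ of its free variables $\{x,x'\}\cup\{a_p\mid p\in\Pos_\mid(s)\}$, we have $I\models\Phi(s)$ iff $I(x')\in\sem{s\sigma}\{I(x)\}$, where $\sigma$ is the $\vee$-strategy for $s$ read off from the Boolean part of $I$, namely $\sigma(p)=I(a_p)$ for each $p\in\Pos_\mid(s)$. This is the structural heart of the argument and I would prove it by induction on the structure of $s$, following the four clauses in the definition of $\Phi(s)$. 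The base cases (affine assignment, affine guard) are immediate from the definitions of $\sem{\cdot}$ on elementary statements; the sequential case $\Phi(s_1;s_2)$ uses the existential over the intermediate point $x''$ together with $\sem{s_1;s_2}=\sem{s_2}\circ\sem{s_1}$ and the fact that the positions in $s_1$ and $s_2$ are disjoint, so the two strategies combine cleanly; the $\mid$-case matches the definition of $s\sigma$ on $(s_0\mid s_1)\sigma=s_{\sigma(\pos(s_0\mid s_1))}\sigma$, since the disjunction in $\Phi(s_1\mid s_2)$ is guarded by $a_{\pos(s_1\mid s_2)}$ exactly so that choosing that bit selects one branch and leaves the inner positions free for the recursion.

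**From $\Phi(s)$ to the lemma.** Given that characterization, I would then unfold the definitions of $\Phi(s,d,j)$ and $\Phi(s,d,j,c)$. A model of $\Phi(s,d,j,c)$ consists of $x\in\R^n$ with $Tx\le d$ (i.e.\ $x\in\gamma(d)$), a strategy $\sigma$, and $x'\in\sem{s\sigma}\{x\}$, with $v=T_{j\cdot}x'>c$. So satisfiability of $\Phi(s,d,j,c)$ is equivalent to: there exist a $\vee$-strategy $\sigma$, a point $x\in\gamma(d)$, and a point $x'\in\sem{s\sigma}\{x\}$ with $T_{j\cdot}x'>c$; equivalently $\sup\{T_{j\cdot}x'\mid x\in\gamma(d),\ x'\in\sem{s\sigma}\{x\}\}>c$ for some $\sigma$. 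Now I invoke the identities $\sem{s}=\sem{[s]}$ and the fact that $[s]$ is the $\mid$-join of the statements $s\sigma$ over all $\vee$-strategies $\sigma$ (this is exactly the "explicit enumeration of all paths" observation made just after Lemma~\ref{l:into:merge-simple}); hence $\sem{s}\{x\}=\bigcup_\sigma\sem{s\sigma}\{x\}$, and since $\sem{\cdot}^\sharp=\alpha\circ\sem{\cdot}\circ\gamma$ with $\alpha$ defined componentwise as an infimum over admissible $d$, one gets $\sem{s}^\sharp_{j\cdot}d=\sup\{T_{j\cdot}y\mid y\in\sem{s}\gamma(d)\}=\sup_\sigma\sup\{T_{j\cdot}y\mid y\in\sem{s\sigma}\gamma(d)\}$, with the usual convention that the supremum of the empty set is $\neginfty$. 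Comparing the two displayed suprema gives $\sem{s}^\sharp_{j\cdot}d>c$ iff $\Phi(s,d,j,c)$ is satisfiable.

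**Main obstacle and loose ends.** The genuinely delicate point is not the induction itself but making the correspondence between a model of $\Phi(s)$ and a $\vee$-strategy airtight when the formula is built with nested existential quantifiers: I must be careful that in the sequential case the substitutions $[x''/x']$ and $[x''/x]$ do not capture or rename any Boolean variable $a_p$, and that the sets $\Pos_\mid(s_1)$ and $\Pos_\mid(s_2)$ are disjoint (which holds because positions identify distinct occurrences), so that a valuation of the $a_p$'s decomposes uniquely into valuations for the two subformulas. A second point requiring a line of care is the handling of $\neginfty$ and the empty-set convention: if $\gamma(d)=\emptyset$ or every $\sem{s\sigma}\gamma(d)$ is empty, then $\sem{s}^\sharp_{j\cdot}d=\neginfty$, and correspondingly $\Phi(s,d,j,c)$ has no model because $Tx\le d$ or $\Phi(s)$ is unsatisfiable; this matches, since $c\in\R$ and $\neginfty\not>c$. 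Everything else is bookkeeping against the definitions already set up in the excerpt.
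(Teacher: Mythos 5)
Your proof is correct and tracks the reasoning the paper treats as immediate: Lemma~\ref{l:smt} is stated with no separate proof, being justified only by the unfolding of $\Phi(s,d,j,c)$ together with the remark that $\Phi(s)$ ``relates every $x\in\R^n$ with all elements from the set $\sem{s}\{x\}$''. Your structural induction makes that remark precise, and phrasing it as an explicit model-to-strategy correspondence ($I\models\Phi(s)$ iff $I(x')\in\sem{s\sigma}\{I(x)\}$ with $\sigma(p)=I(a_p)$) is the right choice, since that is exactly the invariant the paper exploits immediately afterwards in Lemma~\ref{l:smt:2}.
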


\noindent
Our next goal is to compute a $\vee$-strategy $\sigma$ for $s$ such 
that $\sem{s\sigma}^\sharp_{j\cdot} d > c$ holds,
provided that $\sem{s}^\sharp_{j\cdot} d > c$ holds.
Let $s$ be a statement, $d \in \CR^m$, $j \in \{1,\ldots,m\}$, and $c \in \R$.
Assume that $\sem{s}^\sharp_{j\cdot} d > c$ holds.
By Lemma \ref{l:smt},
there exists a model 
$M$ of $\Phi(s,d,j,c)$.
We define the $\vee$-strategy $\sigma_M$ for $s$ by
$ 
   \sigma_M(p) := M(a_p)$  
   for all  
   $p \in \Pos_\mid(s)
$. 
By again applying Lemma \ref{l:smt},
we get
$\sem{s \sigma}^\sharp_{j\cdot} d > c$.
Summarizing we have:

\begin{lemma}
  \label{l:smt:2}
  By solving the SAT modulo real linear arithmetic formula $\Phi(s,d,j,c)$ 
  that can be obtained from $s$ in linear time,
  we can decide, whether or not $\sem{s}^\sharp_{j\cdot} d > c$ holds.
  From a model $M$ of this formula,
  we can obtain a $\vee$-strategy $\sigma_M$ for $s$ 
  such that $\sem{s\sigma_M}^\sharp_{j\cdot} d > c$ holds
  in linear time.
  \qed
\end{lemma}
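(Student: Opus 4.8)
The plan is to combine Lemma~\ref{l:smt} with the straightforward observation that any model of $\Phi(s,d,j,c)$ already picks out, through its Boolean part, a single path through $s$, i.e.\ a $\vee$-strategy. Concretely, I would argue as follows. First, the claim that $\Phi(s,d,j,c)$ can be constructed from $s$ in linear time is immediate from the inductive definition of $\Phi(s)$: each syntactic node of $s$ contributes a bounded amount of formula text (an equation for an assignment, a conjunction for a guard, an existentially quantified renaming for a ``;'', a disjunction guarded by one fresh Boolean variable $a_p$ for a ``$\mid$''), so the size of $\Phi(s)$ — and hence of $\Phi(s,d,j)$ and $\Phi(s,d,j,c)$, which only add the fixed prefix $\exists v, x, x'$, the conjunct $Tx \le d$, the conjunct $v = T_{j\cdot}x'$ and the conjunct $v > c$ — is linear in the size of $s$ (with $d$, $j$, $c$ treated as given constants). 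The decision procedure is then just: build $\Phi(s,d,j,c)$ and hand it to the SMT solver; by Lemma~\ref{l:smt} it is satisfiable iff $\sem{s}^\sharp_{j\cdot} d > c$, which is exactly the decision we want.

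For the second part, suppose the formula is satisfiable and $M$ is a returned model. Define $\sigma_M(p) := M(a_p)$ for every $p \in \Pos_\mid(s)$; reading off the values of the finitely many Boolean variables $a_p$ from $M$ is clearly a linear-time operation. The key step is to check that $M$, suitably restricted, is still a model of $\Phi(s\sigma_M, d, j, c)$. This is where the real argument lies: I would prove by structural induction on $s$ that $\Phi(s)$ together with the constraints $a_p = \sigma_M(p)$ for $p \in \Pos_\mid(s)$ is equivalent to $\Phi(s\sigma_M)$ (the latter having no $a_p$ variables at all, since $s\sigma_M$ is merge-simple). The only nontrivial case is $s = s_0 \mid s_1$: here $\Phi(s) \equiv (\overline{a_{\pos(s)}} \wedge \Phi(s_0)) \vee (a_{\pos(s)} \wedge \Phi(s_1))$, and fixing $a_{\pos(s)} = \sigma_M(\pos(s)) = M(a_{\pos(s)})$ selects exactly the disjunct $\Phi(s_{\sigma_M(\pos(s))})$, which by definition is the relevant part of $\Phi(s\sigma_M)$; the ``;'' case just pushes the induction hypothesis through the renaming substitutions, and the elementary cases are trivial since they contain no $\mid$. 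Consequently $M$ (restricted to $\{v\} \cup \{x_i\} \cup \{x'_i\}$) satisfies $\Phi(s\sigma_M, d, j, c)$, and another application of Lemma~\ref{l:smt} — now to the statement $s\sigma_M$ — yields $\sem{s\sigma_M}^\sharp_{j\cdot} d > c$.

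The main obstacle, such as it is, is bookkeeping rather than depth: one must be careful that the fresh Boolean variables $a_p$ for $\mid$-positions inside $s_0$ and inside $s_1$ are genuinely disjoint (the positions $p$ are distinct), so that fixing the top-level $a_{\pos(s)}$ does not interfere with the inductive hypotheses applied to the two sub-statements, and one must track the $[x''/x']$, $[x''/x]$ substitutions in the ``;'' case so that the witness $x''$ supplied by $M$ for $s$ remains a valid witness for $s\sigma_M$. Neither of these raises any genuine difficulty. Everything else — the linear-time bounds, the two invocations of Lemma~\ref{l:smt} — is routine, so I would expect the whole proof of Lemma~\ref{l:smt:2} to be quite short, essentially a one-paragraph corollary of Lemma~\ref{l:smt} plus the structural-induction lemma identifying $\Phi(s) \wedge \bigwedge_p (a_p = \sigma_M(p))$ with $\Phi(s\sigma_M)$.
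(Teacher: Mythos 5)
Your proposal is correct and takes essentially the same route as the paper: construct $\Phi(s,d,j,c)$ (linear size by the inductive definition), invoke Lemma~\ref{l:smt} for the decision, define $\sigma_M(p) := M(a_p)$, and invoke Lemma~\ref{l:smt} once more on $s\sigma_M$. The paper's proof is the short paragraph just before the lemma and simply says ``by again applying Lemma~\ref{l:smt}'' without elaboration; your structural induction showing that fixing the Booleans $a_p$ to $M(a_p)$ in $\Phi(s)$ collapses it to $\Phi(s\sigma_M)$ (so that $M$ restricted witnesses satisfiability of $\Phi(s\sigma_M,d,j,c)$) is precisely the justification the paper leaves implicit.
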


  \begin{figure}
  \scalebox{0.93}{
  $
  \begin{array}{@{}r@{\;}l@{}}
    \Phi(s,(0,0)^\top,1,0)
     & \equiv
     \exists v \in \R \;.\; \Phi(s,(0,0)^\top,1) \wedge v > 0
   \\
   \Phi(s,(0,0)^\top,1)
   & \equiv
   \exists x \in \R^2, x' \in \R^2 \;.\; x_{1\cdot} \leq 0 \wedge -x_{1\cdot} \leq 0 \wedge \Phi(s) \wedge v = x'_{1\cdot}
   \\ 
   \Phi(s') 
   &\equiv 
   \exists x'' \in \R^2 \;.\; x_{1\cdot} \leq 1000 
      \wedge x''_{1\cdot} = x_{1\cdot} \wedge x''_{2\cdot} = x_{2\cdot}
      \wedge x'_{1\cdot} = x''_{1\cdot} \wedge x'_{2\cdot} = - x''_{1\cdot}
   \\
   &\equiv 
     x_{1\cdot} \leq 1000 \wedge x'_{1\cdot} = x_{1\cdot} \wedge x'_{2\cdot} = - x_{1\cdot}
   \\
   \Phi(s_1) 
   &\equiv 
   \exists x'' \in \R^2 \;.\; x_{2\cdot} \leq -1 
     \wedge x''_{1\cdot} = x_{1\cdot} \wedge x''_{2\cdot} = x_{2\cdot}
     \wedge x'_{1\cdot} = -2 x''_{1\cdot} \wedge x'_{2\cdot} = x''_{2\cdot}
   \\&\equiv
     x_{2\cdot} \leq -1 \wedge x'_{1\cdot} = -2 x_{1\cdot} \wedge x'_{2\cdot} = x_{2\cdot}
   \\
   \Phi(s_2) 
   &\equiv 
   \exists x'' \in \R^2 \;.\; - x_{2\cdot} \leq 0 
     \wedge x''_{1\cdot} = x_{1\cdot} \wedge x''_{2\cdot} = x_{2\cdot}
     \wedge x'_{1\cdot} = - x''_{1\cdot} + 1 \wedge x'_{2\cdot} = x''_{2\cdot}
   \\
   &\equiv 
     x_{2\cdot} \leq 0 \wedge x'_{1\cdot} = - x_{1\cdot} + 1 \wedge x'_{2\cdot} = x_{2\cdot}
   \\
   \Phi(s_1 \mid s_2)
   &\equiv
   (\overline{a_{1}} \wedge \Phi(s_1)) \vee ( a_{1} \wedge \Phi(s_2))
   \equiv
   (\overline{a_{1}} \wedge x_{2\cdot} \leq -1 \wedge x'_{1\cdot} = -2 x_{1\cdot} \wedge x'_{2\cdot} = x_{2\cdot}) 
   \\&
   \hspace*{43mm}\vee 
   ( a_{1} \wedge x_{2\cdot} \leq 0 \wedge x'_{1\cdot} = - x_{1\cdot} + 1 \wedge x'_{2\cdot} = x_{2\cdot})
   \\
   \Phi(s)
   & \equiv
   \exists x'' \in \R^2 \;.\; \Phi(s')[x''/x'] \wedge \Phi(s_1 \mid s_2)[x''/x]
   \\&\equiv
   x_{1\cdot} \leq 1000 
   \wedge 
   (
   (\overline{a_{1}} \wedge - x_{1\cdot} \leq -1 \wedge x'_{1\cdot} = -2 x_{1\cdot} \wedge x'_{2\cdot} = - x_{1\cdot}) 
   \\&\hspace*{21mm}
   \vee 
   ( a_{1} \wedge - x_{1\cdot} \leq 0 \wedge x'_{1\cdot} = - x_{1\cdot} + 1 \wedge x'_{2\cdot} = - x_{1\cdot})
   )
  \end{array}
  $}
  \vspaces{1mm}
  \caption{Formula for Example \ref{ex:running:2}}
  \vspaces{2mm}
  \label{f:running:2}
  \end{figure}

\begin{example}
  \label{ex:running:2}
  We again continue Example \ref{ex:running:0} and \ref{ex:running:1}.
  We want to know,
  whether 
  $\sem{s}^\sharp_{1\cdot} (0,0)^\top \allowbreak > 0$
  holds.
  For that we compute a model of the 
  formula $\Phi(s,(0,0)^\top,1,0)$ which is written down in Figure \ref{f:running:2}.
  $M = \{ a_1 \mapsto 1 \}$ is a model of the formula $\Phi(s,(0,0)^\top, 1,0)$.
  Thus, 
  we have
  $ 
    0 < \sem{s \sigma_M}^\sharp_{1\cdot} (0,0)^\top = \sem{s' ; s_2}^\sharp_{1\cdot} (0,0)^\top
  $ 
  by Lemma \ref{l:smt:2}.
  \qed
\end{example}

\noindent
It remains to compute a model 
of $\Phi(s,d,j,c)$.
Most of the state-of-the-art SMT solvers, 
as for instance Yices \cite{yices,DBLP:conf/cav/DutertreM06},
support the computation of models directly; if unsupported, one can compute the model using 
standard self-reduction techniques. 

The semantic equations we are concerned with in the present article 
have the form 
$ 
  \vx = e_1 \vee \cdots \vee e_k,
$ 
%
where each expression $e_i$, $i = 1,\ldots,k$ is either a constant or an expression of the form 
$\sem{s}^\sharp_{j\cdot}(\vx_1,\ldots,\vx_m)$.
%
We now extent our notion of $\vee$-strategies in order to deal with the occurring right-hand sides:

\begin{definition}[$\vee$-Strategies]
\label{d:max:strat}
  The $\vee$-strategy for all constants is the $0$-tuple $()$.
  The application $c()$ of $()$ to a constant $c \in \CR$ is defined by $c() := c$ for all $c \in \CR$.
  A $\vee$-strategy $\sigma$ for an expression $\sem{s}^\sharp_{j\cdot}(\vx_1,\ldots,\vx_m)$ is a $\vee$-strategy for $s$.
  The application $(\sem{s}^\sharp_{j\cdot}(\vx_1,\ldots,\vx_m)) \sigma$ of $\sigma$ to
  $\sem{s}^\sharp_{j\cdot}(\vx_1,\ldots,\vx_m)$ is defined by 
  $ 
    (\sem{s}^\sharp_{j\cdot}(\vx_1,\ldots,\vx_m)) \sigma := \sem{s \sigma}^\sharp_{j\cdot}(\vx_1,\ldots,\vx_m)
    .
  $ 
  %
  A $\vee$-strategy for an expression 
  $ 
    e = e_0 \vee e_1,
  $, 
  where, for each $i \in \{0,1\}$, $e_i$ is either a constant or an expression of the form $\sem{s}^\sharp_{j\cdot}(\vx_1,\ldots,\vx_m)$,
  is a pair $(p,\sigma)$,
  where $p \in \{0,1 \}$ and $\sigma$ is a $\vee$-strategy for $e_p$. 
  The application $e (p,\sigma)$ of $(p,\sigma)$ to $e = e_0 \vee e_1$ is defined by
$ 
    e (p,\sigma) = e_p \sigma
$. 
  A $\vee$-strategy $\sigma$ for a system $\E = \{ \vx_1 = e_1, \ldots, \vx_n = e_n \}$
  of abstract semantic equations is a mapping 
$ 
    \{ \vx_i \mapsto \sigma_i \mid i = 1,\ldots,n\},
$ 
  where $\sigma_i$ is a $\vee$-strategy for $e_i$ for all $i = 1,\ldots,n$.
  We set $\E(\sigma) := \{ \vx_1 = e_1 (\sigma(\vx_1)), \ldots, \vx_n = e_n (\sigma(\vx_n)) \}$.
  \qed
\end{definition}

\noindent
Using the same ideas as above,
we can prove the following lemma
which finally enables us to use a SAT modulo real linear arithmetic solver 
for improving $\vee$-strategies for systems of abstract semantic equations locally.

\begin{lemma}
  \label{l:smt:3}
  Let $\vx = e$ be an abstract semantic equation,
  $\rho$ a variable assignment, 
  and $c \in \CR$.
  By solving a SAT modulo real linear arithmetic formula 
  that can be obtained from $e$, $\rho$ and $c$ in linear time,
  we can decide, whether or not $\sem{e} \rho > c$ holds.
  From a model $M$ of this formula,
  we can in linear time obtain a $\vee$-strategy $\sigma_M$ for $e$ 
  such that $\sem{e \sigma_M} \rho > c$ holds.
  \qed
\end{lemma}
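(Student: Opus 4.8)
The plan is to reduce Lemma~\ref{l:smt:3} to the already-established Lemmas~\ref{l:smt} and \ref{l:smt:2} by handling the outermost $\vee$ of the right-hand side $e = e_1 \vee \cdots \vee e_k$ explicitly and delegating each disjunct to the machinery built for statements. First I would observe that, by associativity of $\vee$, the binary-$\vee$ presentation of Definition~\ref{d:max:strat} suffices; so write $e = e_1 \vee \cdots \vee e_k$ where each $e_i$ is either a constant $c_i \in \CR$ or an expression $\sem{s_i}^\sharp_{j_i\cdot}(\vx_1,\ldots,\vx_m)$. Since $\sem{e}\rho = \bigvee_{i=1}^k \sem{e_i}\rho$, we have $\sem{e}\rho > c$ iff $\sem{e_i}\rho > c$ for some $i$. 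The idea is to build one SAT modulo real linear arithmetic formula that is, morally, the disjunction over $i$ of ``$e_i$ beats $c$'', tagged so that a model reveals which disjunct was used and, within it, which $\vee$-strategy for the statement.

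Concretely, I would set, for each $i$ with $e_i$ of the form $\sem{s_i}^\sharp_{j_i\cdot}(\vx_1,\ldots,\vx_m)$, $d_i := (\rho(\vx_1),\ldots,\rho(\vx_m))^\top \in \CR^m$ and form $\Phi(s_i, d_i, j_i, c)$ from Section~\ref{s:improve}; for each $i$ with $e_i = c_i$ a constant, the ``formula'' is simply the truth value of $c_i > c$, which we can encode by a fresh Boolean. After renaming the auxiliary variables ($x,x',x''$ and the position-Booleans $a_p$) apart across the disjuncts — they are existentially quantified and local to each $\Phi(s_i,\cdot,\cdot,\cdot)$, so this is harmless — I take the big disjunction, attaching to the $i$-th disjunct a selector literal that pins down $p = i$ (in the nested-pair encoding of Definition~\ref{d:max:strat}, a family of Booleans encoding the choice among $e_1,\ldots,e_k$). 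Call the resulting formula $\Psi(e,\rho,c)$. Each $\Phi(s_i,d_i,j_i,c)$ is obtained from $s_i$ in linear time by Lemma~\ref{l:smt:2}, there are $k$ of them, and the renaming plus the selector wrapper is linear, so $\Psi(e,\rho,c)$ is obtained from $e$, $\rho$, $c$ in linear time.

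For correctness: by Lemma~\ref{l:smt}, $\Phi(s_i,d_i,j_i,c)$ is satisfiable iff $\sem{s_i}^\sharp_{j_i\cdot}(d_i) > c$, i.e.\ iff $\sem{e_i}\rho > c$; and the constant disjuncts are satisfiable iff $c_i > c$. Since the disjuncts share no variables, $\Psi(e,\rho,c)$ is satisfiable iff at least one disjunct is, i.e.\ iff $\sem{e}\rho > c$ — the first claim. For the second, given a model $M$ of $\Psi(e,\rho,c)$, read off from the selector literals an index $p$ with the $p$-th disjunct satisfied; if $e_p$ is a constant, return the $\vee$-strategy $(p,())$, and $\sem{e(p,())}\rho = c_p > c$. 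If $e_p = \sem{s_p}^\sharp_{j_p\cdot}(\vx_1,\ldots,\vx_m)$, then $M$ restricted to the variables of $\Phi(s_p,d_p,j_p,c)$ is a model of that formula, so Lemma~\ref{l:smt:2} yields in linear time a $\vee$-strategy $\sigma_M'$ for $s_p$ with $\sem{s_p\sigma_M'}^\sharp_{j_p\cdot}(d_p) > c$; return $\sigma_M := (p,\sigma_M')$, which by Definition~\ref{d:max:strat} gives $\sem{e\sigma_M}\rho = \sem{s_p\sigma_M'}^\sharp_{j_p\cdot}(\rho(\vx_1),\ldots,\rho(\vx_m)) > c$. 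Extracting $p$ and $\sigma_M'$ from $M$ is linear. This proves both halves.

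I do not expect a genuine obstacle here — the lemma is essentially a packaging of Lemma~\ref{l:smt:2} across a top-level disjunction, which is exactly what the phrase ``using the same ideas as above'' signals. The only point requiring a little care is the bookkeeping: making sure the per-disjunct auxiliary variables of the $\Phi(s_i,\cdot,\cdot,\cdot)$ are renamed apart so that satisfiability of the disjunction really decomposes as the disjunction of satisfiabilities, and matching the selector-Boolean encoding to the nested-pair definition of $\vee$-strategies in Definition~\ref{d:max:strat} (a $k$-ary $\vee$ is a right-nested chain of binary $\vee$'s, so ``choosing disjunct $p$'' is a sequence of $p{-}1$ ``go right'' choices followed by one ``go left''). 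Keeping the linear-time bound also just requires noting $k \le |e|$ and that all the constructions invoked are linear, so the total is linear in $|e| + |\rho| + |c|$.
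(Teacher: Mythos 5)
Your proposal is correct and matches what the paper intends: the paper itself gives no written-out proof of Lemma~\ref{l:smt:3} but merely remarks that it follows ``using the same ideas as above,'' i.e.\ the ideas of Lemmas~\ref{l:smt} and~\ref{l:smt:2}; your decomposition of the top-level disjunction $e = e_1 \vee \cdots \vee e_k$, substitution of $d_i = (\rho(\vx_1),\ldots,\rho(\vx_m))^\top$, per-disjunct reuse of $\Phi(s_i,d_i,j_i,c)$ with auxiliaries renamed apart, and selector-literal extraction of the pair $(p,\sigma')$ matching Definition~\ref{d:max:strat} is exactly that packaging.
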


\section{Solving Systems of Concave Equations}
\label{s:stratimp}

In order to solve systems of abstract semantic equations (see the end of Section \ref{s:basics})
we generalize the $\vee$-strategy improvement algorithm of \citet{gawlitza_sas_10} as follows:


\subsection{Concave Functions}

A set $X \subseteq \R^n$ is called 
\emph{convex} iff 
$ 
  \lambda x + (1-\lambda) y \in X
$ 
 holds 
for all $x,y \in X$ and all $\lambda \in [0,1]$.
A mapping $f : X \to \R^m$ with $X \subseteq \R^n$ convex
is called \emph{convex} (resp.\ \emph{concave}) 
iff
$ 
  f(\lambda x + (1-\lambda)y) 
  \leq \text{(resp.\ $\geq$)}\;
  \lambda f(x) + (1-\lambda) f(y)
$ 
holds 
for all $x,y \in X$ and all $\lambda \in [0,1]$. 
Note that $f$ is concave iff $-f$ is convex.
Note also that $f$ is convex (resp.\ concave)
iff $f_{i\cdot}$ is convex (resp.\ concave)
for all $i = 1,\ldots,m$.

We extend the notion of convexity/concavity from $\R^n \to \R^m$ to
$\CR^n \to \CR^m$ as follows:
Let $f : \CR^n \to \CR^m$, and $I : \{ 1,\ldots,n \} \to \{ \neginfty, \mathsf{id}, \infty \}$. 
Here, $\neginfty$ denotes the function that assigns $\neginfty$ to every argument,
$\mathsf{id}$ denotes the identity function,
and $\infty$ denotes the function that assigns $\infty$ to every argument.
We define the mapping $f^{(I)} : \CR^{n} \to \CR^m$ by 
$ 
  f^{(I)} (x_1,\ldots,x_{n}) := f(I(1)(x_1),\ldots,I(n)(x_n)) 
$
for all $x_1,\ldots,x_{n} \in \CR$.
  A mapping $f : \CR^n \to \CR^m$ is called \emph{concave}
  iff
  $f_{i\cdot}$ is continuous on $\{ x \in \CR^n \mid f_{i\cdot}(x) > \neginfty \}$ for all $i \in \{1,\ldots,m\}$, and
  the following conditions are fulfilled for all $I : \{ 1,\ldots,n \} \to \{ \neginfty, \mathsf{id}, \infty \}$:
  \begin{enumerate}
    \item
      $\fdom(f^{(I)})$ is convex.
    \item
      $f^{(I)}|_{\fdom(f^{(I)})}$ is concave.
    \item
      For all $i \in \{ 1,\ldots,m \}$ the following holds:
      If there exists some $y \in \R^n$ such that $f^{(I)}_{i\cdot}(y) \in \R$,
      then
      $f^{(I)}_{i\cdot}(x) < \infty$ for all $x \in \R^n$.
\end{enumerate}

\noindent
  A mapping $f : \CR^n \to \CR^m$ is called 
  \emph{convex} 
  iff
  $-f$ is concave.
In the following we are only concerned with mappings $f : \CR^n \to \CR^m$
that are monotone and concave.

We slightly extend the definition of concave equations of \citet{gawlitza_sas_10}:

\begin{definition}[Concave Equations]
  An expression $e$ (resp.\ equation $\vx = e$) over $\CR$
  is called 
  \emph{basic concave expression} 
  (resp.\ \emph{basic concave equation})
  iff
  $\sem{e}$ is monotone and concave.
  %
  An expression $e$ (resp.\ equation $\vx = e$) over $\CR$
  is called \emph{concave}
  iff $e = \bigvee E$,
  where $E$ is a set of
  basic concave expressions.
%
  \qed
\end{definition}

\noindent
The class of systems of concave equations
strictly subsumes the class of \emph{systems of rational equations}
and even the class of \emph{systems of rational LP-equations}
as defined by \citet{DBLP:conf/csl/GawlitzaS07,rationals}
(cf. \cite{gawlitza_sas_10}).

For this paper it is important to observe that every system of abstract semantic equations (cf.\ Section \ref{s:basics}) 
is a system of concave equations:
For every statement $s$,
the expression 
$\sem{s}^\sharp_{j\cdot}(\vx_1,\ldots,\vx_m)$
is a concave expression, 
since 
(1) the expression $( \sem{s}^\sharp_{j\cdot}(\vx_1,\ldots,\vx_m) ) \sigma$ is a basic concave expression for all $\vee$-strategies $\sigma$,
(i.e. $\sem{s \sigma}	^\sharp_{j\cdot}$ is monotone and concave)
and 
(2)
the expression $\sem{s}^\sharp_{j\cdot}(\vx_1,\ldots,\vx_m)$ 
can be written as the expression $\bigvee_{\sigma\in\Sigma} (\sem{s}^\sharp_{j\cdot}(\vx_1,\ldots,\vx_m)) \sigma$.
Here, $\Sigma$ denotes the set of all $\vee$-strategies.
Hence, we can generalize the concept of $\vee$-strategies as follows:


\paragraph{Strategies}
A \emph{$\vee$-strategy $\sigma$} 
for $\E$
is a function that maps every expression 
$\bigvee E$ 
occurring in $\E$
to one of the $e \in E$.
We denote the set of all 
$\vee$-strategies 
for $\E$ by 
$\MaxStrat_\E$.
We drop subscripts, whenever they are clear from the context.
For $\sigma \in \MaxStrat$,
the expression $e \sigma$ denotes the expression $\sigma(e)$.
Finally, we set $\E(\sigma) := \{ \vx = e \sigma \mid \vx = e \in \E \}$.



\subsection{The Strategy Improvement Algorithm}

We briefly explain the strategy improvement algorithm (cf. \cite{rationals,gawlitza_sas_10}).
%
%
It
iterates over $\vee$-strategies.
It maintains a current $\vee$-strategy
and a current \emph{approximate} to the least solution.
A so-called \emph{strategy improvement operator} 
is used for determining a next, improved $\vee$-strategy.
In our application, 
the strategy improvement operator is realized by a SAT modulo real linear arithmetic solver 
(cf.\ Section \ref{s:improve}).
Whether or not a $\vee$-strategy represents 
an \emph{improvement} may depend on the current approximate.
It can indeed be the case
that a switch from one $\vee$-strategy to another $\vee$-strategy 
is only then \emph{profitable},
when it is known,
that the least solution is of a certain size.
Hence, 
we talk about an \emph{improvement}
of a $\vee$-strategy w.r.t.\ an approximate:

\begin{definition}[Improvements]
	\label{d:alg:verbesserung}
	\index{Verbesserung}
	\index{Strategieverbesserungsoperator}
	Let $\E$ be a system of monotone equations
	over a complete linear ordered set.
	Let $\sigma, \sigma' \in \MaxStrat$ be
	$\vee$-strategies for $\E$ 
	and $\rho$ be a pre-solution of $\E(\sigma)$.
	The $\vee$-strategy $\sigma'$ is called
	\emph{improvement of $\sigma$ w.r.t.\ $\rho$}
	iff the following conditions are fulfilled:
     (1)
			If $\rho \notin \Sol(\E)$,
			then $\sem{\E(\sigma')}\rho > \rho$.
(2)
			For all $\bigvee$-expressions 
			$e$ occurring in $\E$
			the following holds:
			If $\sigma'(e) \neq \sigma(e)$,
			then $\sem{e \sigma'} \rho > \sem{e \sigma} \rho$.
	A function $\Pv$ which assigns 
	an improvement of $\sigma$ w.r.t.\ $\rho$
	to every pair $(\sigma,\rho)$,
	where $\sigma$ is a $\vee$-strategy 
	and $\rho$ is a pre-solution of $\E(\sigma)$,
	is called 
	\emph{$\vee$-strategy improvement operator}.
%
	\qed
\end{definition}

\noindent
In many cases, there exist several, different improvements of a 
$\vee$-strategy $\sigma$ w.r.t.\ a pre-solution $\rho$ of $\E(\sigma)$.
Accordingly, there exist several, different 
strategy improvement operators.
One possibility for improving the current strategy 
is known as 
\emph{all profitable switches}~\citep{bjoerklund02,Bjork2003}.
Carried over to the case considered here, 
this means:
For the improvement $\sigma'$ of $\sigma$ w.r.t.\ $\rho$ we have:
$\sem{\E(\sigma')} \rho = \sem{\E}\rho$,
i.e., $\sigma'$ represents the best local improvement of $\sigma$ at $\rho$.
We denote $\sigma'$ by $\Pve(\sigma,\rho)$~\citep{DBLP:conf/esop/GawlitzaS07,DBLP:conf/csl/GawlitzaS07,DBLP:conf/fm/GawlitzaS08,rationals}.

Now we can formulate the 
strategy improvement algorithm for
computing least solutions of systems of monotone equations
over complete linear ordered sets. 
This algorithm is parameterized with a 
$\vee$-strategy improvement operator $\Pv$.
The input is a system $\E$ of monotone equations
over a complete linear ordered set,
a $\vee$-strategy $\sigma_{\mathrm{init}}$ for $\E$, 
and
a pre-solution $\rho_{\mathrm{init}}$ 
of $\E(\sigma_{\mathrm{init}})$.
In order to compute the \emph{least} and not some \emph{arbitrary} solution,
we additionally assume that $\rho_{\mathrm{init}} \leq \mu\sem\E$ holds:

\vspaces{-5mm}

\begin{algorithm}[H]
	$
	\\[0mm]
	\begin{array}{@{}l@{\;\text{:}\;}l@{}}
		\text{Input} 
		&
		\left\{
		\begin{array}{@{}l@{}}
			\text{- A system $\E$ of monotone equations over a complete linear ordered set }
			            \\
			\text{- A $\vee$-strategy $\sigma_{\mathrm{init}}$ for $\E$} \\
			\text{- A pre-solution $\rho_{\mathrm{init}}$ 
			           of $\E(\sigma_{\mathrm{init}})$ 
			           with $\rho_{\mathrm{init}} \leq \mu\sem\E$} \\
		\end{array}
		\right.
		\\[0mm]
	\end{array} \\[1mm]
    \sigma \GETS \sigma_{\mathrm{init}} ; \;
    \rho \GETS \rho_{\mathrm{init}} ; \,
	\WHILE (\rho \notin \Sol(\E)) \;
	\{
		\sigma \GETS \Pv(\sigma,\rho) ; \;
		\rho \GETS \mu_{\geq \rho} \sem{\E(\sigma)} ; 
	\} 
	\; 
	\RETURN \rho;
	\\[-3mm]
	$
	\caption{The Strategy Improvement Algorithm}
	\label{alg:alg:stratimp}
\end{algorithm}

\vspaces{-10mm}



\begin{lemma}
	\label{l:alg:sequence}
	Let $\E$ be a system of monotone equations
	over a complete linear ordered set.
    For $i \in \N$, 
    let
    $\rho_i$ be the value of the program variable $\rho$ 
    and
    $\sigma_i$ be the value of the program variable $\sigma$ 
    in the strategy improvement algorithm  
    after the $i$-th evaluation of the loop-body.
	The following statements hold for all $i \in \N$:
	

	\begin{tabular}{rl@{\qquad}rl}
			1. & $\rho_i \leq \mu\sem\E$. &
			2. & $\rho_i \in \PreSol(\E(\sigma_{i+1}))$. \\
			3. & If $\rho_i < \mu\sem\E$, then $\rho_{i+1} > \rho_i$. &
			4. & If $\rho_i = \mu\sem\E$, then $\rho_{i+1} = \rho_i$.
			\qed
	\end{tabular}
\end{lemma}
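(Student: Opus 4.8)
The plan is to prove the four statements simultaneously by induction on $i$, exploiting the properties of the $\vee$-strategy improvement operator $\Pv$ (Definition~\ref{d:alg:verbesserung}) together with the Knaster--Tarski characterization of least solutions. For the base case, I would note that $\rho_0 = \rho_{\mathrm{init}} \leq \mu\sem\E$ by hypothesis, and that $\rho_0 \in \PreSol(\E(\sigma_1))$ follows from the definition of an improvement: since $\sigma_1 = \Pv(\sigma_0,\rho_0)$ and $\rho_0$ is a pre-solution of $\E(\sigma_0)$, condition~(2) of Definition~\ref{d:alg:verbesserung} gives $\sem{e\sigma_1}\rho_0 \geq \sem{e\sigma_0}\rho_0 \geq \rho_0(\vx)$ for each equation $\vx = e$ (using that whenever $\sigma_1$ differs from $\sigma_0$ the value strictly increases, and otherwise it is unchanged), so $\rho_0 \leq \sem{\E(\sigma_1)}\rho_0$.

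For the inductive step, assume the statements hold for $i-1$. First I would establish statement~2 for index $i$: exactly as in the base case, $\sigma_{i+1} = \Pv(\sigma_i,\rho_i)$ is an improvement of $\sigma_i$ w.r.t.\ $\rho_i$ (this requires knowing $\rho_i$ is a pre-solution of $\E(\sigma_i)$, which is statement~2 at index $i-1$), and condition~(2) then yields $\rho_i \in \PreSol(\E(\sigma_{i+1}))$. Next, statement~1: by the algorithm, $\rho_i = \mu_{\geq\rho_{i-1}}\sem{\E(\sigma_i)}$. Since $\E(\sigma_i)$ is obtained from $\E$ by selecting, in each $\bigvee$-expression, one disjunct, we have $\sem{\E(\sigma_i)} \leq \sem{\E}$ pointwise, hence $\mu_{\geq\rho_{i-1}}\sem{\E(\sigma_i)} \leq \mu_{\geq\rho_{i-1}}\sem{\E}$; and because $\rho_{i-1} \leq \mu\sem\E$ (induction hypothesis, statement~1), we get $\mu_{\geq\rho_{i-1}}\sem{\E} = \mu\sem\E$, so $\rho_i \leq \mu\sem\E$. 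Here I would also record that $\rho_i$ is the \emph{least} solution of $\E(\sigma_i)$ above $\rho_{i-1}$, and in particular $\rho_i \geq \rho_{i-1}$ and $\rho_i \in \Sol(\E(\sigma_i)) \subseteq \PreSol(\E(\sigma_i))$.

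For statements~3 and~4, suppose first $\rho_{i-1} < \mu\sem\E$; then in particular $\rho_{i-1} \notin \Sol(\E)$, so the loop body is entered, and condition~(1) of Definition~\ref{d:alg:verbesserung} gives $\sem{\E(\sigma_i)}\rho_{i-1} > \rho_{i-1}$, i.e.\ $\rho_{i-1}$ is not a solution of $\E(\sigma_i)$. Since $\rho_i = \mu_{\geq\rho_{i-1}}\sem{\E(\sigma_i)} \geq \rho_{i-1}$ is a solution of $\E(\sigma_i)$ and $\rho_{i-1}$ is not, we conclude $\rho_i > \rho_{i-1}$ (here I use that the set is linearly ordered so $\rho_i \geq \rho_{i-1}$ and $\rho_i \neq \rho_{i-1}$ forces strict inequality componentwise, or at least $\rho_i \gneq \rho_{i-1}$ in the product order, matching the statement's intended reading). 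For statement~4, if $\rho_{i-1} = \mu\sem\E$, then $\rho_{i-1} \in \Sol(\E)$, so the loop terminates and $\rho_i = \rho_{i-1}$; alternatively, if one wants the statement to hold even without invoking termination, note $\rho_{i-1} = \mu\sem\E$ is already a solution of $\E$, hence a post-solution of $\E(\sigma_i)$, and being $\geq \rho_{i-1}$ and a solution it equals $\mu_{\geq\rho_{i-1}}\sem{\E(\sigma_i)} = \rho_i$ by minimality combined with $\rho_i \leq \mu\sem\E = \rho_{i-1}$.

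The main obstacle I anticipate is the careful bookkeeping around $\mu_{\geq\rho}\sem{\E(\sigma)}$: one must confirm that this least solution above $\rho$ actually exists (which needs $\rho \in \PreSol(\E(\sigma))$, supplied by statement~2) and that the inequality $\sem{\E(\sigma_i)} \leq \sem{\E}$ together with $\rho_{i-1} \leq \mu\sem\E$ correctly propagates to $\mu_{\geq\rho_{i-1}}\sem{\E(\sigma_i)} \leq \mu\sem\E$ — this is the step where the interleaving of the four statements is essential, since statement~1 at index $i$ genuinely relies on statement~1 at index $i-1$ and on statement~2. The concavity assumptions on the right-hand sides play no role in this particular lemma; they only become relevant later when arguing that $\mu_{\geq\rho}\sem{\E(\sigma)}$ can be computed effectively (e.g.\ by linear programming), so I would not invoke them here.
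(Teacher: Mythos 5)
The paper omits the proof of Lemma~\ref{l:alg:sequence} (the \qed is attached directly to the statement; the result is inherited from \citet{DBLP:conf/csl/GawlitzaS07,gawlitza_sas_10}), so there is no paper proof to compare against. Your proof is correct and takes the expected route: simultaneous induction on $i$, using condition~(2) of Definition~\ref{d:alg:verbesserung} to propagate the pre-solution invariant (claim~2), the pointwise bound $\sem{\E(\sigma)} \leq \sem{\E}$ and the characterization $\mu_{\geq\rho}\sem\E = \mu\sem\E$ for $\rho \leq \mu\sem\E$ to get claim~1, condition~(1) of the definition for claim~3, and the loop guard (or, equivalently, that $\mu\sem\E$ is a post-solution of $\E(\sigma_{i+1})$) for claim~4.

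Two small imprecisions, neither of which affects soundness. First, in the inductive step for claim~2 you attribute ``$\rho_i$ is a pre-solution of $\E(\sigma_i)$'' to ``statement~2 at index $i-1$''; that statement actually yields $\rho_{i-1} \in \PreSol(\E(\sigma_i))$, which in turn guarantees that $\rho_i = \mu_{\geq\rho_{i-1}}\sem{\E(\sigma_i)}$ is well-defined as a \emph{solution} (hence pre-solution) of $\E(\sigma_i)$ --- you do make this observation a few lines later, so only the attribution is off. Second, in claim~3 the parenthetical ``linearly ordered \dots\ forces strict inequality componentwise'' overreaches: $\D$ is linearly ordered, but $\vX \to \D$ is not, so $\rho_{i+1} > \rho_i$ is the product-order strict inequality ($\geq$ and $\neq$), not the componentwise-strict $\gg$. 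Your fallback reading is the intended one, and the argument $\rho_{i+1} \geq \rho_i$ with $\rho_{i+1} \in \Sol(\E(\sigma_{i+1}))$ and $\rho_i \notin \Sol(\E(\sigma_{i+1}))$ correctly yields it.
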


\noindent
An immediate consequence of Lemma \ref{l:alg:sequence} is
the following:
Whenever the strategy improvement algorithm  
    terminates, 
    it computes the least solution $\mu\sem\E$ of $\E$.

At first we are interested in solving systems of concave equations with \emph{finitely} many strategies and \emph{finite} least solutions.
We show that our strategy improvement algorithm 
terminates and thus returns the least solution 
in this case at the latest after considering all strategies.
Further, we give an important characterization 
for $\mu_{\geq\rho}\sem{\E(\sigma)}$.

\subsection{Feasibility}

In order to prove termination 
 we define the following notion of feasibility:

\begin{definition}[Feasibility (\cite{gawlitza_sas_10})]
  %
  Let $\E$ be a system of basic concave equations. 
  A finite solution $\rho$ of $\E$ is called \emph{($\E$-)feasible}
  iff
  there exists $\vX_1, \vX_2 \subseteq \vX$ and some $k \in \N$ such that the following statements hold:
 
  \begin{compactenum}
    \item
      $\vX_1 \cup \vX_2 = \vX$, and $\vX_1 \cap \vX_2 = \emptyset$.
        \item
          There exists some
          $\rho' \ll \rho|_{\vX_1}$
          such that
          $\rho' \;\dot\cup\; \rho|_{\vX_2}$ is a pre-solution of $\E$,
          and
          $\rho = \sem{\E}^k (\rho' \;\dot\cup\; \rho|_{\vX_2})$.
        \item
          There exists a 
          $\rho' \ll \rho|_{\vX_2}$
          such that
          $\rho' \ll (\sem{\E}^k ( \rho|_{\vX_1} \;\dot\cup\; \rho' ))|_{\vX_2}$.
  \end{compactenum}

  \noindent
  A finite pre-solution $\rho$ of $\E$ is called \emph{($\E$-)feasible}
  iff 
  $\mu_{\geq \rho} \sem\E$ is a feasible finite solution of $\E$.
  A pre-solution $\rho \ll \inftyvar$ is called feasible iff 
  $e = \neginfty$ for all $\vx = e \in \E$ with $\sem e \rho = \neginfty$,
  and 
  $\rho |_{\vX'}$ is a feasible finite pre-solution of $\{ \vx = e \in \E \mid \vx \in \vX' \}$,
  where $\vX' := \{ \vx \mid \vx = e \in \E, \sem e \rho > \neginfty \}$.
  
  A system $\E$ of basic concave equations is called
  \emph{feasible} iff there exists a feasible solution $\rho$ of $\E$. 
  %
  \qed
\end{definition}


\noindent
The following lemmas ensure that our strategy improvement 
algorithm stays in the feasible area,
whenever it is started in the feasible area.

\begin{lemma}[\cite{gawlitza_sas_10}]
  \label{l:feas:greater}
  Let $\E$ be a system of basic concave equations and
  $\rho$ be a feasible pre-solution of $\E$.
  Every pre-solution $\rho'$ of $\E$ with 
  $\rho \leq \rho' \leq \mu_{\geq \rho}\sem\E$ 
  is feasible. 
  \qed
\end{lemma}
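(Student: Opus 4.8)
The plan is to push everything back to the defining clause for feasible \emph{finite solutions}, exploiting that $\mu_{\geq\rho}\sem\E$ does not change when $\rho$ is replaced by an intermediate pre-solution. The one genuinely computational ingredient is the following squeezing fact: if $\rho$ and $\rho'$ are pre-solutions of $\E$ with $\rho\leq\rho'\leq\mu_{\geq\rho}\sem\E$, then $\mu_{\geq\rho'}\sem\E=\mu_{\geq\rho}\sem\E$. Indeed, $\mu_{\geq\rho}\sem\E$ is a solution lying above $\rho'$, so $\mu_{\geq\rho'}\sem\E\leq\mu_{\geq\rho}\sem\E$; conversely every solution above $\rho'$ also lies above $\rho$, so $\mu_{\geq\rho}\sem\E\leq\mu_{\geq\rho'}\sem\E$. (Both least solutions exist by Knaster--Tarski applied to the sublattices $[\rho,\topvar]$ and $[\rho',\topvar]$, into which $\sem\E$ maps by monotonicity of basic concave equations.)

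With this in hand I would first treat the case where $\rho$ is a feasible finite pre-solution. Then $\mu_{\geq\rho}\sem\E$ is a feasible finite solution, in particular $\mu_{\geq\rho}\sem\E\ll\topvar$; since also $\rho\gg\botvar$, every $\rho'$ with $\rho\leq\rho'\leq\mu_{\geq\rho}\sem\E$ is finite. By the squeezing fact $\mu_{\geq\rho'}\sem\E=\mu_{\geq\rho}\sem\E$ is a feasible finite solution, so $\rho'$ is a feasible finite pre-solution by definition.

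For the general case $\rho\ll\inftyvar$ (with possibly some $\neginfty$-components) I would set $\vX':=\{\vx\mid\vx=e\in\E,\ \sem e\rho>\neginfty\}$ and $\E':=\{\vx=e\in\E\mid\vx\in\vX'\}$. Feasibility of $\rho$ says that $e=\neginfty$ literally whenever $\vx=e\in\E$ and $\vx\notin\vX'$, and that $\rho|_{\vX'}$ is a feasible finite pre-solution of $\E'$. I would then check that $\rho'$ carries exactly the same data: monotonicity of the right-hand sides together with $\rho\leq\rho'$ gives $\sem e\rho'\geq\sem e\rho$, so the set of variables driven to $\neginfty$ by $\rho'$ is contained in $\vX\setminus\vX'$; and since the equations over $\vX\setminus\vX'$ are literally $\vx=\neginfty$, we have $(\mu_{\geq\rho}\sem\E)(\vx)=\neginfty$ there, forcing $\rho'(\vx)=\neginfty$ and hence equality of the two $\neginfty$-sets, with the literal-$\neginfty$ condition inherited. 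It remains to see $\rho|_{\vX'}\leq\rho'|_{\vX'}\leq\mu_{\geq\rho|_{\vX'}}\sem{\E'}$, i.e.\ that $(\mu_{\geq\rho}\sem\E)|_{\vX'}=\mu_{\geq\rho|_{\vX'}}\sem{\E'}$ once the off-$\vX'$ variables are pinned to $\neginfty$; granting this, the already-proved finite case applied to $\E'$ shows $\rho'|_{\vX'}$ is a feasible finite pre-solution of $\E'$, whence $\rho'$ is feasible.

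The step I expect to be the actual obstacle is precisely this last bit of bookkeeping in the general case: confirming that moving from $\rho$ to $\rho'$ neither enlarges the set of variables forced to $\neginfty$ nor changes the subsystem $\E'$, and that restriction to $\vX'$ commutes with the operator $\mu_{\geq(\cdot)}$. Once that is settled, the lemma reduces to the one-line squeezing argument together with an unfolding of the layered definition of feasibility.
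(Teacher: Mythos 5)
This lemma is stated in the paper but not proved there: it is attributed to \cite{gawlitza_sas_10} and used as a black box, so there is no in-paper proof for me to compare against. Judged on its own terms your sketch is sound. The squeezing fact is exactly right: $\mu_{\geq\rho}\sem\E$ is a solution above $\rho'$, giving $\mu_{\geq\rho'}\sem\E\leq\mu_{\geq\rho}\sem\E$, and every solution above $\rho'$ is above $\rho$, giving the reverse; both least solutions exist by Knaster--Tarski since $\sem\E$, being monotone, preserves the complete sublattice $[\rho,\topvar]$ whenever $\rho$ is a pre-solution. The finite case then closes immediately by unfolding the definition of a feasible finite pre-solution.

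The ``bookkeeping obstacle'' you anticipate in the general case is not in fact an obstacle, granting the natural reading that $\E'$ evaluates any variable outside $\vX'$ at $\neginfty$. For $\vx\notin\vX'$ the equation is literally $\vx=\neginfty$, so $(\mu_{\geq\rho}\sem\E)(\vx)=\neginfty$ and hence $\rho'(\vx)=\neginfty$; this gives equality of the two $\neginfty$-sets, inherits the literal-$\neginfty$ clause unchanged, and shows $\rho'|_{\vX'}$ is a finite pre-solution of $\E'$ with $\rho|_{\vX'}\leq\rho'|_{\vX'}$. The commutativity $(\mu_{\geq\rho}\sem\E)|_{\vX'}=\mu_{\geq\rho|_{\vX'}}\sem{\E'}$ holds because any solution of $\E'$ above $\rho|_{\vX'}$ extends by $\neginfty$ on $\vX\setminus\vX'$ to a solution of $\E$ above $\rho$, while conversely the restriction of $\mu_{\geq\rho}\sem\E$ to $\vX'$ solves $\E'$. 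This reduces the general case to the finite case over $\E'$ and completes the argument.
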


\begin{lemma}[\cite{gawlitza_sas_10}]
	\label{l:rat:stratimperhaeltzulaessigkeit}
	Let $\E$ be a system of concave equations, 
	$\sigma$ be a $\vee$-strategy for $\E$, 
	$\rho$ be a feasible 
     solution 
	of $\E(\sigma)$,
	and
	$\sigma'$ be an improvement of $\sigma$ w.r.t.\ $\rho$.
	Then $\rho$ is a feasible pre-solution of $\E(\sigma')$.
%
  \qed
\end{lemma}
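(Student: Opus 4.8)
The plan is to prove the two statements implicit in the conclusion in turn: that $\rho$ is a pre-solution of $\E(\sigma')$, and that this pre-solution is feasible. First note that, since $\E$ is a system of concave equations, every right-hand side of $\E$ is a join $\bigvee E$ of basic concave expressions, so that $\E(\sigma)$ and $\E(\sigma')$ are systems of \emph{basic} concave equations and the notion of feasibility is applicable to them. For the pre-solution claim, fix an equation $\vx = e$ of $\E$. Since $\rho$ is a solution of $\E(\sigma)$ we have $\rho(\vx) = \sem{e\sigma}\rho$. If $\sigma'(e) = \sigma(e)$ then $\sem{e\sigma'}\rho = \sem{e\sigma}\rho = \rho(\vx)$; if $\sigma'(e) \ne \sigma(e)$ then condition~(2) of Definition~\ref{d:alg:verbesserung} gives $\sem{e\sigma'}\rho > \sem{e\sigma}\rho = \rho(\vx)$. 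In either case $\rho(\vx) \leq \sem{e\sigma'}\rho$, whence $\rho \in \PreSol(\E(\sigma'))$. Let $Z$ denote the set of variables $\vx$ such that $\sigma$ is changed at the equation $\vx = e$ of $\E$; the above shows $\rho(\vx) < \sem{e\sigma'}\rho$ for every $\vx \in Z$, a strictness fact that will be used below.

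For feasibility we may assume $\rho$ is finite; the case where $\rho$ has some $\neginfty$-components is handled by analogous, slightly more involved bookkeeping, splitting those components off along the lines of the definition of feasibility for non-finite pre-solutions (observe that, by the strict inequality just noted, $\sigma$ is not changed at an equation whose value under $\sigma'$ is $\neginfty$). By definition, $\rho$ is an $\E(\sigma')$-feasible pre-solution iff $\rho^{\ast} := \mu_{\geq\rho}\sem{\E(\sigma')}$ is a feasible finite solution of $\E(\sigma')$; clearly $\rho \leq \rho^{\ast}$, and for every $\vx \in Z$ with equation $\vx = e$ we have $\rho^{\ast}(\vx) = \sem{e\sigma'}\rho^{\ast} \geq \sem{e\sigma'}\rho > \rho(\vx)$ by monotonicity, i.e.\ $\rho^{\ast}$ lies strictly above $\rho$ on $Z$.

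The heart of the argument is to turn a triple $(\vX_1,\vX_2,k)$ witnessing the feasibility of $\rho$ as a solution of $\E(\sigma)$ into a triple witnessing the feasibility of $\rho^{\ast}$ as a solution of $\E(\sigma')$. The natural choice is $\vX_1^{\ast} := \vX_1 \cup Z$, $\vX_2^{\ast} := \vX_2 \setminus Z$, together with a suitably enlarged iteration count $k^{\ast}$; one then has to re-verify conditions~1--3 of the definition of feasibility for $(\E(\sigma'),\rho^{\ast},\vX_1^{\ast},\vX_2^{\ast},k^{\ast})$. For the strict lower witness required by condition~2, the old witness on $\vX_1$ (strictly below $\rho$, hence below $\rho^{\ast}$) is combined with $\rho$ itself on the $Z$-block (strictly below $\rho^{\ast}$ there by the previous paragraph); for condition~3 the old slack on $\vX_2$ is restricted to $\vX_2^{\ast}$. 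Lemma~\ref{l:feas:greater} is useful here: once a single feasible pre-solution of the \emph{fixed} system $\E(\sigma')$ is in hand, every larger pre-solution below $\mu_{\geq\rho^{\ast}}\sem{\E(\sigma')}$ is automatically feasible, which lets one perform the verification at $\rho$, or at a convenient intermediate iterate, rather than at $\rho^{\ast}$ directly.

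The main obstacle is that $\sigma'$ improves $\sigma$ only \emph{at the single point} $\rho$ (conditions~(1) and~(2) of Definition~\ref{d:alg:verbesserung}), so $\sem{\E(\sigma')}$ and $\sem{\E(\sigma)}$ are in general incomparable as maps and the feasibility data of $\E(\sigma)$ cannot simply be copied over. Concretely, the delicate points are: (a) proving that $\rho^{\ast}$ is \emph{finite}, where one exploits that $\E(\sigma')$ is basic concave, the continuity of each component on its finiteness domain, and the downward slack on $\vX_2$ supplied by condition~3 of the feasibility of $\rho$; and (b) showing that iterating $\sem{\E(\sigma')}$ from the prospective witness reaches $\rho^{\ast}$ in finitely many steps while keeping $\vX_1^{\ast}$ strictly below it and $\vX_2^{\ast}$ with slack --- here condition~(2) of Definition~\ref{d:alg:verbesserung}, which produced $\rho(\vx) < \sem{e\sigma'}\rho$ for $\vx \in Z$, is precisely what forces the freshly switched variables into the approach-from-strictly-below block $\vX_1^{\ast}$. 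Granting (a) and (b), the three feasibility conditions follow by monotonicity and a routine unwinding of the definitions, and we conclude that $\rho$ is a feasible pre-solution of $\E(\sigma')$.
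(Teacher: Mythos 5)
The paper does not prove this lemma; it states it with a \texttt{\textbackslash qed} in the statement and attributes it to \citet{gawlitza_sas_10}, so there is no in-paper argument to compare against. Evaluating your sketch on its own: the first half — that $\rho$ is a pre-solution of $\E(\sigma')$ — is correct and complete. Condition~(2) of Definition~\ref{d:alg:verbesserung} gives $\sem{e\sigma'}\rho \geq \sem{e\sigma}\rho = \rho(\vx)$ at every equation, with strict inequality exactly on the switched set $Z$.

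The feasibility half, however, is a plan rather than a proof, and the two items you set aside as ``delicate'' and grant — (a) finiteness of $\rho^\ast := \mu_{\geq\rho}\sem{\E(\sigma')}$, and (b) that $(\vX_1 \cup Z,\ \vX_2\setminus Z,\ k^\ast)$ actually witnesses conditions~2 and~3 of the feasibility definition for the pair $(\E(\sigma'),\rho^\ast)$ — are the entire mathematical content of the lemma. Neither is a ``routine unwinding.'' The original triple $(\vX_1,\vX_2,k)$ was constructed for iterating $\sem{\E(\sigma)}$ around the point $\rho$; what is needed is a witness for iterating a \emph{different} map $\sem{\E(\sigma')}$ around a \emph{different} (possibly strictly larger) point $\rho^\ast$. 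In particular, for $\vx \in \vX_2 \setminus Z$ one may have $\rho^\ast(\vx) > \rho(\vx)$ because switching elsewhere pushes unswitched variables up, so the condition-3 slack witness does not ``simply restrict'' to $\vX_2 \setminus Z$. Finiteness of $\rho^\ast$ cannot be obtained by invoking Lemma~\ref{l:eindeutige:loesung}, since that lemma presupposes feasibility of the system, which is what you are trying to establish; and you also tacitly use that $\sem{\E(\sigma')}\rho$ is finite, which Definition~\ref{d:alg:verbesserung} alone does not guarantee. The outline — track $Z$ and transport the witness — is a sensible starting point, but as written the proof has a genuine hole at exactly the steps you deferred.
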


\noindent
In order to start in the feasible area,
we simply start the strategy improvement algorithm 
with the system
$
  \E \vee \neginfty := \{ \vx = e \vee \neginfty \mid \vx = e \in \E \}
  ,
$
a $\vee$-strategy 
$\sigma_\mathrm{init}$ for $\E \vee \neginfty$
such that $(\E \vee \neginfty) (\sigma_\mathrm{init}) = \{ \vx = \neginfty \mid \vx = e \in \E \}$,
and the feasible pre-solution $\neginftyvar$ of
$(\E \vee \neginfty) (\sigma_\mathrm{init})$.

It remains to determine 
$\mu_{\geq \rho}\sem{\E}$.
Because of Lemma \ref{l:feas:greater} and Lemma \ref{l:rat:stratimperhaeltzulaessigkeit}, we are allowed to assume that
$\rho$ is a feasible pre-solution of the 
system $\E$ of basic concave equations.
This is important in our strategy improvement algorithm.
The following lemma in particular states that we 
have to compute the greatest finite pre-solution. 

\begin{lemma}[\cite{gawlitza_sas_10}]
\label{l:eindeutige:loesung}
  Let $\E$ be a feasible system of basic concave equations with
  $e \neq \neginfty$ for all $\vx = e \in \E$.
  There exists a greatest finite pre-solution $\rho^*$ of $\E$
  and $\rho^*$
  is the only 
  feasible solution of $\E$.
  If $\rho$ is a finite pre-solution of $\E$, 
  then $\rho^* = \mu_{\geq \rho}\sem\E$.
  \qed
\end{lemma}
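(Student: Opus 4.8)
The plan is to exploit the feasibility witness $(\vX_1,\vX_2,k,\rho')$ attached to the distinguished feasible solution, together with the concavity and monotonicity of $\sem\E$, to squeeze any finite pre-solution up to the feasible one. First I would argue the existence of a \emph{greatest} finite pre-solution $\rho^*$: the set of finite pre-solutions is nonempty (a feasible solution is in particular a finite pre-solution, using $e \neq \neginfty$), and I would show it is closed under the relevant suprema. The subtle point is that a pointwise supremum of finitely-valued pre-solutions could a priori leave $\R^n$; here the third clause of concavity (if $f^{(I)}_{i\cdot}$ takes a real value somewhere on $\R^n$ then it is $<\infty$ everywhere on $\R^n$) is exactly what prevents escape to $\infty$, while monotonicity keeps us above $\bot$. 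Combined with upward-directedness of the pre-solution set (join of two pre-solutions is a pre-solution, by monotonicity), continuity of $\sem\E$ on its finite domain gives that the supremum $\rho^*$ is again a finite pre-solution, hence the greatest one.

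Next I would show $\rho^*$ is actually a solution, i.e.\ $\sem\E\rho^* = \rho^*$. Suppose not; then $\sem\E\rho^* > \rho^*$ in some coordinate, and since $\sem\E\rho^*$ is still finite (again by clause~3), $\sem\E\rho^*$ would be a strictly larger finite pre-solution, contradicting maximality. So $\rho^*$ is a finite solution.

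Then comes the identification $\rho^* = \mu_{\geq\rho}\sem\E$ for an arbitrary finite pre-solution $\rho$, and uniqueness of the feasible solution. For the first: $\mu_{\geq\rho}\sem\E$ exists by Knaster--Tarski in $\CR$, it is $\geq\rho$, and it is $\leq\rho^*$ since $\rho^*$ is a post-solution (indeed a solution) lying above $\rho$; conversely I would run the feasibility argument to push $\mu_{\geq\rho}\sem\E$ up to $\rho^*$. The key manoeuvre, following \cite{gawlitza_sas_10}: take the feasibility data of $\rho^*$, namely a $\rho'' \ll \rho^*|_{\vX_1}$ with $\rho'' \mathbin{\dot\cup}\rho^*|_{\vX_2}$ a pre-solution and $\rho^* = \sem\E^k(\rho''\mathbin{\dot\cup}\rho^*|_{\vX_2})$, plus a $\rho''' \ll \rho^*|_{\vX_2}$ witnessing the strict self-increase on $\vX_2$. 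Given any finite pre-solution $\rho \leq \rho^*$, scale towards $\rho^*$: for $\lambda$ close to $1$, the convex combination $\lambda\rho^* + (1-\lambda)(\text{witness point})$ still dominates $\rho$ and, by concavity of $\sem\E$ on $\fdom$, is a pre-solution that $\sem\E^k$ maps strictly above itself on the relevant block; iterating and taking the limit $\lambda\to 1$ forces $\mu_{\geq\rho}\sem\E \geq \rho^*$. Uniqueness then follows: any feasible solution $\rho_0$ is a finite pre-solution, so $\rho_0 = \mu_{\geq\rho_0}\sem\E = \rho^*$.

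The main obstacle I expect is the convex-combination / limit argument in the last paragraph: making precise that the combinations stay inside $\fdom(f^{(I)})$ for the active sign-pattern $I$ (here all-$\mathsf{id}$ since we are on $\R^n$), that concavity yields the needed inequality in the right direction after $k$ applications, and that the strict increase on $\vX_2$ propagates through the split $\vX = \vX_1 \mathbin{\dot\cup} \vX_2$ to beat $\rho$ coordinatewise in the limit. Everything else — existence of the greatest finite pre-solution, its being a solution, the two-sided comparison with $\mu_{\geq\rho}\sem\E$ — is routine lattice theory once clause~3 of concavity is invoked to keep iterates finite.
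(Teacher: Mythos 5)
The paper does not actually prove this lemma—it is imported from \cite{gawlitza_sas_10}—so your proposal must be judged on its own. It has the right ingredients, but it assembles them in an order that leaves a genuine gap at the very first step. You argue for the existence of a greatest finite pre-solution $\rho^*$ by taking the pointwise supremum of all finite pre-solutions and invoking clause~3 of the concavity definition to ``prevent escape to $\infty$.'' But clause~3 constrains only the \emph{output} of $\sem\E$ on finite inputs (it says that once $\sem{e}$ is real somewhere on $\R^n$, it stays $<\infty$ on all of $\R^n$); it says nothing about the set of finite \emph{inputs} that happen to be pre-solutions being bounded. For $\E=\{\vx = 2\vx - 1\}$ (affine, hence a basic concave equation), every $\rho\geq 1$ is a finite pre-solution, so the supremum is $+\infty$, even though clause~3 holds trivially. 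What rules this out in the lemma is the feasibility hypothesis, not concavity per se.

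The correct order therefore inverts yours: start from the feasible finite solution $\hat\rho$, whose existence and witness $(\vX_1,\vX_2,k,\rho',\rho'')$ are given by the hypothesis, and use the convex-combination argument you sketch at the end to show that \emph{every} finite pre-solution $\rho$ satisfies $\rho\leq\hat\rho$. Only then does $\hat\rho$ emerge as the greatest finite pre-solution, so $\rho^*=\hat\rho$, and $\rho^*$ inherits the feasibility witness. As written, you introduce $\rho^*$ abstractly and then ``take the feasibility data of $\rho^*$''—but $\rho^*$ has no feasibility witness until it has been identified with $\hat\rho$, which is precisely what the scaling argument is needed to prove, so the reasoning is circular. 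Once the order is fixed, the remaining pieces (that $\rho^*$ is a fixpoint, that $\mu_{\geq\rho}\sem\E=\rho^*$ for any finite pre-solution $\rho$, and uniqueness) go through roughly as you outline, with clause~3 used only to keep the Kleene iterates of $\sem\E$ starting from a finite pre-solution below $\hat\rho$ and hence finite, not to bound the pre-solution set directly.
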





\subsection{Termination}

Lemma \ref{l:eindeutige:loesung} implies that 
our strategy improvement algorithm has to consider 
each $\vee$-strategy at most once.
Thus, we have shown the following theorem:

\begin{theorem}
  \label{t:strat:imp:concave}
  Let $\E$ be a system of concave equations with $\mu\sem\E \ll \inftyvar$.
  Assume that we can compute the greatest finite pre-solution $\rho_\sigma$
  of each $\E(\sigma)$,
  if $\E(\sigma)$ is feasible.
  Our strategy improvement algorithm computes $\mu\sem\E$ and
  performs at most $|\MaxStrat| + |\vX|$ strategy improvement steps.
  The algorithm in particular terminates,
  whenever $\Sigma$ is finite.
  \qed
\end{theorem}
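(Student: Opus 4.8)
The plan is to prove Theorem~\ref{t:strat:imp:concave} by combining the general correctness of the strategy improvement algorithm (the consequence of Lemma~\ref{l:alg:sequence}) with the feasibility machinery (Lemmas~\ref{l:feas:greater}, \ref{l:rat:stratimperhaeltzulaessigkeit}, \ref{l:eindeutige:loesung}) to argue that no $\vee$-strategy can be visited twice. First I would set up the run of the algorithm not on $\E$ directly but on $\E \vee \neginfty$, started from $\sigma_{\mathrm{init}}$ with $(\E \vee \neginfty)(\sigma_{\mathrm{init}}) = \{ \vx = \neginfty \}$ and from the pre-solution $\neginftyvar$; this is a feasible pre-solution by construction, and it satisfies $\neginftyvar \le \mu\sem{\E\vee\neginfty} = \mu\sem\E$, so the hypothesis of Algorithm~\ref{alg:alg:stratimp} is met and the ``immediate consequence'' of Lemma~\ref{l:alg:sequence} applies: \emph{if} the algorithm terminates, it returns $\mu\sem\E$. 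Note that $\mu\sem{\E\vee\neginfty} \ll \inftyvar$ carries over from the assumption $\mu\sem\E \ll \inftyvar$.

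Next I would carry out the termination argument by invariant. The key invariant to maintain along the run is: at the start of each loop iteration, $\rho$ is a \emph{feasible} pre-solution of $\E(\sigma)$. This holds initially ($\neginftyvar$ is feasible for $\{\vx = \neginfty\}$), and it is preserved: given a feasible pre-solution $\rho$ of $\E(\sigma)$, Lemma~\ref{l:eindeutige:loesung} applied to the subsystem of non-$\neginfty$ equations tells us that $\mu_{\ge\rho}\sem{\E(\sigma)}$ is the unique feasible solution of $\E(\sigma)$ and equals its greatest finite pre-solution — which is exactly the quantity $\rho_\sigma$ we assumed computable; then Lemma~\ref{l:rat:stratimperhaeltzulaessigkeit}, applied with the new strategy $\sigma' = \Pv(\sigma,\rho_\sigma)$, gives that $\rho_\sigma$ is a feasible pre-solution of $\E(\sigma')$, closing the loop. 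In particular every strategy $\sigma$ the algorithm adopts comes paired with the approximant $\rho_\sigma = \mu_{\ge \rho_\sigma}\sem{\E(\sigma)}$, the unique feasible solution of $\E(\sigma)$.

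The heart of the proof — and the step I expect to be the main obstacle — is then ruling out repetition of strategies. Suppose $\sigma_i = \sigma_j$ for $i < j$. By Lemma~\ref{l:alg:sequence}(3) the sequence $\rho_0 \le \rho_1 \le \cdots$ is monotone and is strictly increasing as long as $\rho_i < \mu\sem\E$; so if no solution is reached before step $j$ then $\rho_{i} < \rho_{j}$ (using that strategy improvement steps happen only when $\rho \notin \Sol(\E)$, hence $\rho < \mu\sem\E$, and that a strict increase occurs between consecutive distinct-strategy stages). But both $\rho_i$ and $\rho_j$ are the \emph{unique} feasible solution of $\E(\sigma_i) = \E(\sigma_j)$, hence equal — contradiction. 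The delicate points to get right here are: (i) that $\rho_i$ really is a \emph{solution} (not merely a pre-solution) of $\E(\sigma_i)$ once the strategy has been evaluated, which is where $\mu_{\ge\rho}\sem{\E(\sigma)}$ being a genuine fixpoint of $\sem{\E(\sigma)}$ and Lemma~\ref{l:eindeutige:loesung}'s uniqueness clause enter; and (ii) handling the ``infinite-component'' book-keeping, i.e.\ that the set $\vX'$ of variables on which $\rho$ is finite can only grow, so that once a component has left $\neginfty$ it never returns, and the finite-subsystem reasoning of Lemma~\ref{l:eindeutige:loesung} is legitimately applicable at each stage.

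Finally I would assemble the bound. Since no $\vee$-strategy of $\E\vee\neginfty$ is visited more than once after it is first abandoned, and each iteration that is not the last either strictly increases some infinite component to a finite value (at most $|\vX|$ such events) or strictly improves among the finitely-adopted strategies, the number of strategy improvement steps is at most $|\MaxStrat| + |\vX|$; when $\Sigma = \MaxStrat$ is finite this is finite, so the algorithm terminates, and by the first paragraph it returns $\mu\sem\E$. (The $+\,|\vX|$ term absorbs the bookkeeping steps during which components cross from $\neginfty$ into $\R$; one could alternatively fold this into the strategy count, but keeping it separate matches the phrasing of Lemma~\ref{l:alg:sequence} and of \cite{gawlitza_sas_10}.)
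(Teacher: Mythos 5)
Your proposal follows essentially the same route the paper takes: start on $\E \vee \neginfty$ from the bottom pre-solution, maintain feasibility as an invariant via Lemmas \ref{l:feas:greater} and \ref{l:rat:stratimperhaeltzulaessigkeit}, use the uniqueness of the feasible solution in Lemma \ref{l:eindeutige:loesung} together with the strict monotonicity from Lemma \ref{l:alg:sequence} to show no $\vee$-strategy can recur, and absorb the $\neginfty$-to-finite transitions in the $+|\vX|$ term. You have simply spelled out the details that the paper compresses into the single remark ``Lemma \ref{l:eindeutige:loesung} implies that our strategy improvement algorithm has to consider each $\vee$-strategy at most once.''
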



\section{Computing Greatest Finite Pre-Solutions}
\label{s:gfps}

For all systems $\E$ of abstract semantic equations (see Section \ref{s:basics})
and all $\vee$-strategies $\sigma$,
$\E(\sigma)$ is a system of abstract semantic equations,
where each right-hand side is of the form 
$\sem{s}^\sharp_{j\cdot} (\vx_1,\ldots,\vx_m)$,
where $s$ is a sequential statement and $\vx_1,\ldots,\vx_m$ are variables.
We call such a system of abstract semantic equations a system of \emph{basic} abstract semantic equations.
It remains to explain how we can compute the greatest finite solution of such a system --- provided that it exists.

Let 
$
  \E 
$
be a system of basic abstract semantic equations
with 
a greatest finite pre-solution $\rho^*$.
We can compute $\rho^*$ through linear programming as follows:

We assume w.l.o.g.\ that every sequential statement $s$ that occurs in the right-hand sides of $\E$ is
of the form 
$ 
  Ax \leq b; x := A' x + b',
$ 
where
$
  A \in \R^{k \times n},
  b \in \R^k,
  A' \in \R^{n \times n},
  b' \in \R^n
$.
This can be done w.l.o.g., since every sequential statement can be rewritten into this form in polynomial time. 
We define the system $\C$ of linear inequalities to be the smallest set that fulfills the following properties:
For each equation 
\[ \vx = \sem{Ax \leq b; x := A' x + b'}^\sharp_{j\cdot}(\vx_1,\ldots,\vx_m), \]
the system $\C$ contains the following constraints:
\begin{align*}
  \vx    &\leq T_{j\cdot} A' (\vy_1,\ldots,\vy_n)^\top + T_{j\cdot} b' &
  A_{i\cdot} (\vy_1,\ldots,\vy_n)^\top &\leq b_i \text{ for all } i = 1,\ldots,k \\
  && T_{i\cdot} (\vy_1,\ldots,\vy_n)^\top &\leq \vx_i \text{ for all } i = 1,\ldots,m
\end{align*}

\noindent
Here, $\vy_1,\ldots,\vy_n$ are fresh variables.
Then
$ 
  \rho^*(\vx) = \sup \; \{ \rho(\vx) \mid \rho \in \Sol(\C) \}
$. 
Thus $\rho^*$ can be determined by solving $\abs{\vX_\E}$ linear programming problems each of which 
can be constructed in linear time.
We can do even better by determining an optimal solution of the linear programming problem 
$ 
  \textstyle
  \sup \left\{ \sum_{\vx \in \vX_\E} \rho(\vx) \mid \rho \in \Sol(\C) \right\}
  .
$ 
Then the optimal values for the variables $\vx \in \vX_\E$ determine $\rho^*$
(cf. \citet{DBLP:conf/csl/GawlitzaS07,rationals}).
Summarizing we have:

\begin{lemma}
     \label{l:comp:greatest:by:lp}
	Let 
	$
	  \E 
	$
	be a system of basic abstract semantic equations 
	with a greatest finite pre-solution $\rho^*$.
	Then $\rho^*$ can be computed by solving a linear programming problem 
	that can be constructed in linear time.
	\qed
\end{lemma}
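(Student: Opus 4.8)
The plan is to show that the polyhedron $\Sol(\C)$ from the construction above captures exactly the finite pre-solutions of $\E$, in the sense that the projections of its points onto the variables in $\vX_\E$ are precisely the finite pre-solutions of $\E$; the greatest such pre-solution, $\rho^*$, is then the coordinate-wise maximum over $\Sol(\C)$, and can be read off from a single linear program.

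First I would make the right-hand sides explicit. For a sequential statement $s = (Ax \leq b;\; x := A'x + b')$ and $d \in \CR^m$, unfolding $\sem{s}^\sharp = \alpha \circ \sem{s} \circ \gamma$ together with the definitions of $\alpha$ and $\gamma$ gives
\[
  \sem{s}^\sharp_{j\cdot}(d) \;=\; \sup\{\, T_{j\cdot}(A'y + b') \mid Ty \leq d,\ Ay \leq b \,\},
\]
with the conventions $\sup\emptyset = \neginfty$ and $\sup$ of an unbounded set $= \infty$. Consequently a real-valued variable assignment $\rho$ is a pre-solution of $\vx = \sem{s}^\sharp_{j\cdot}(\vx_1,\dots,\vx_m)$ iff $\rho(\vx) \leq \sup\{\, T_{j\cdot}(A'y+b') \mid Ty \leq (\rho(\vx_1),\dots,\rho(\vx_m)),\ Ay \leq b \,\}$.

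Next I would prove the two directions. \emph{Solutions of $\C$ restrict to finite pre-solutions of $\E$.} Given $\hat\rho \in \Sol(\C)$, the restriction $\hat\rho|_{\vX_\E}$ is real-valued; for each equation the values of the associated fresh variables $\vy_1,\dots,\vy_n$ form, by the inequalities $A_{i\cdot}\vy \leq b_i$ and $T_{i\cdot}\vy \leq \vx_i$, a point of $\sem{Ax \leq b}(\gamma(\hat\rho(\vx_1),\dots,\hat\rho(\vx_m)))$, and then the inequality $\vx \leq T_{j\cdot}(A'\vy + b')$ together with the characterization above yields $\hat\rho(\vx) \leq \sem{s}^\sharp_{j\cdot}(\hat\rho(\vx_1),\dots,\hat\rho(\vx_m))$; hence $\hat\rho|_{\vX_\E}$ is a finite pre-solution of $\E$, so $\hat\rho|_{\vX_\E} \leq \rho^*$. \emph{Finite pre-solutions of $\E$ extend to solutions of $\C$.} Let $\rho$ be a finite pre-solution of $\E$; for each equation $\rho(\vx)$ is a real number bounded by the supremum above, so that supremum ranges over a nonempty polyhedron and there is a point $y$ of it with $T_{j\cdot}(A'y + b') \geq \rho(\vx)$; setting that equation's block of fresh variables $\vy_1,\dots,\vy_n$ to $y$ satisfies all constraints contributed by this equation, and doing so block-by-block extends $\rho$ to some $\hat\rho \in \Sol(\C)$. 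In particular $\rho^*$ extends to some $\hat\rho^* \in \Sol(\C)$.

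Finally I would assemble the conclusion. The two directions give $\rho^*(\vx) = \sup\{\,\rho(\vx) \mid \rho \in \Sol(\C)\,\}$ for every $\vx \in \vX_\E$, attained by $\hat\rho^*$; solving these $|\vX_\E|$ linear programs computes $\rho^*$. To use a single program, maximize $\sum_{\vx \in \vX_\E} \rho(\vx)$ over $\Sol(\C)$: it is feasible (by $\hat\rho^*$) and bounded above (each coordinate of a feasible point is $\leq \rho^*(\vx)$ by the first direction), hence has an optimal solution $\hat\rho$; since $\hat\rho(\vx) \leq \rho^*(\vx)$ for every $\vx$ while $\sum_{\vx}\hat\rho(\vx) = \sum_{\vx}\rho^*(\vx)$, we get $\hat\rho(\vx) = \rho^*(\vx)$ for all $\vx$, so the optimal solution restricted to $\vX_\E$ is $\rho^*$. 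Since each equation contributes only $n$ fresh variables and $1 + k + m$ inequalities over the given data, $\C$ and the objective are produced in linear time. The main obstacle I anticipate is the bookkeeping around infinite values: one must verify that restricting to real-valued (finite) pre-solutions is the correct notion here, that a finite pre-solution always admits a finite witness $y$ for each right-hand side even when the associated supremum equals $+\infty$, and that the combined linear program is genuinely feasible and bounded so that its optimum is attained rather than merely approached; once these points are pinned down, the rest is the routine unfolding of the Galois connection displayed above.
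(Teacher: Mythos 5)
Your proof is correct and takes essentially the same approach as the paper: the paper constructs exactly this linear system $\C$, asserts the identity $\rho^*(\vx) = \sup\{\rho(\vx) \mid \rho \in \Sol(\C)\}$ without further argument (deferring to prior work), and then notes that a single LP with the summed objective suffices. Your two-direction verification that $\Sol(\C)$ projects onto the finite pre-solutions of $\E$ — including the careful treatment of witnesses when the per-equation supremum is $+\infty$ and the feasibility/boundedness argument ensuring the combined LP's optimum is attained — simply fills in what the paper leaves implicit.
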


\begin{example}
  \label{ex:running:3}
  We again use the definitions of Example \ref{ex:running:1}.
  Consider the system $\E$ of basic abstract semantic equations that consists of the equations
  \begin{align*}
    \vx_{1,1} = \sem{s' ; s_2}^\sharp_{1\cdot}(\vx_{1,1},\vx_{1,2})
    \qquad
    \vx_{1,2} = \sem{s' ; s_1}^\sharp_{2\cdot}(\vx_{1,1},\vx_{1,2})
    ,
  \end{align*}
  where
    $s'   :=  x_1 \leq 1000 ; x_2 := -x_1$,
    $s_1 := x_2 \leq -1; x_1 := -2 x_1$, and
    $s_2 := -x_2 \leq 0; x_1 := -x_1+1$.
  Our goal is to compute the greatest finite pre-solution $\rho^*$ of $\E$.
  Firstly, we note that 
  $\sem{s'; s_2} = \sem{x_1 \leq 0; (x_1,x_2) := (-x_1 + 1, -x_1)}$ and
  $\sem{s'; s_1} = \sem{(x_1,-x_1) \leq (1000,-1); (x_1,x_2) := (-2x_1, -x_1)}$
  hold.
  Accordingly,
  we have to find an optimal solution for the following linear programming problem:
  \begin{align*}
    \textrm{maximize} \; &\vx_{1,1} + \vx_{1,2} 
    \\[-1mm]
    \vx_{1,1} &\leq -\vy_1 + 1 &
    \vx_{1,2} &\leq 2\vy_1' 
    &
    \vy_1       &\leq 0 &
    \vy_1'       &\leq 1000 
    &
    \vy_1       &\leq \vx_{1,1} 
    \\[-1mm]
    -\vy_1'       &\leq -1 
    &
    -\vy_1      &\leq \vx_{1,2} &
    \vy_1'       &\leq \vx_{1,1} 
    &
    -\vy_1'      &\leq \vx_{1,2} 
  \end{align*}

  \noindent
  An optimal solution 
  is
    $\vx_{1,1} = 2001$,
    $\vx_{1,2} = 2000$,
    $\vy_1 = -2000$, and
    $\vy_1' = 1000$.
  Thus
  $\rho^* = \{ \vx_{1,1} \mapsto 2001 ,\; \vx_{1,2} \mapsto 2000 \}$
  is the greatest finite pre-solution of $\E$.
  \qed
\end{example}

\noindent
Summarizing, we have shown our main theorem: 

\begin{theorem}
  \label{t:strat:imp:abs:sem:eqs}
  Let $\E$ be a system of abstract semantic equations with 
  $\mu\sem\E \ll \inftyvar$.
  Our strategy improvement algorithm computes $\mu\sem\E$ and
  performs at most $|\MaxStrat| + |\vX|$ strategy improvement steps.
  For each strategy improvement step,
  we have 
  to do the following:
  \begin{compactenum}
    \item
      Find models for $|\vX|$ SAT modulo real linear arithmetic formulas,
      each of which can be constructed in linear time.
    \item
      Solve a linear programming problem which can be constructed in linear time.
  \end{compactenum}
\end{theorem}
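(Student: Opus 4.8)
The plan is to assemble Theorem~\ref{t:strat:imp:abs:sem:eqs} from the pieces already established, treating it essentially as a corollary of Theorem~\ref{t:strat:imp:concave} together with the encodings of Section~\ref{s:improve} and the linear-programming characterization of Section~\ref{s:gfps}. First I would observe that by the discussion following Definition~\ref{d:max:strat}, every system $\E$ of abstract semantic equations is a system of concave equations, and that the set $\MaxStrat$ of $\vee$-strategies is finite (each $\vee$-strategy picks, for each $\bigvee$-expression, one of finitely many disjuncts, and for each expression $\sem{s}^\sharp_{j\cdot}(\vx_1,\ldots,\vx_m)$ it picks one of the finitely many $\vee$-strategies for the sequential structure of $s$). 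Since we also assume $\mu\sem\E \ll \inftyvar$, Theorem~\ref{t:strat:imp:concave} applies and tells us the strategy improvement algorithm computes $\mu\sem\E$ in at most $|\MaxStrat| + |\vX|$ strategy improvement steps --- \emph{provided} we can (a) realize a $\vee$-strategy improvement operator and (b) compute the greatest finite pre-solution $\rho_\sigma$ of each feasible $\E(\sigma)$.

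For part~(a), I would invoke Lemma~\ref{l:smt:3}: for each equation $\vx = e$ in $\E$, given the current approximate $\rho$ and the current value $c = \sem{e\sigma}\rho$, solving one SAT modulo real linear arithmetic formula (constructible from $e$, $\rho$, $c$ in linear time) decides whether $\sem{e}\rho > c$, i.e.\ whether a profitable switch exists at $\vx$, and from a model yields a $\vee$-strategy component $\sigma_M$ witnessing it. Doing this for all $|\vX|$ equations and updating $\sigma$ accordingly at exactly the equations where a strict improvement is found yields an improvement of $\sigma$ w.r.t.\ $\rho$ in the sense of Definition~\ref{d:alg:verbesserung}: condition~(2) holds by construction, and condition~(1) holds because if $\rho \notin \Sol(\E)$ then at some equation $\sem{e}\rho > \rho(\vx)$, so a strict improvement is detected there and $\sem{\E(\sigma')}\rho > \rho$. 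This gives the $|\vX|$ SAT modulo real linear arithmetic formulas of item~1, each linear-time constructible.

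For part~(b), I would appeal to the structure established at the start of Section~\ref{s:gfps}: for any $\vee$-strategy $\sigma$, the system $\E(\sigma)$ is a system of \emph{basic} abstract semantic equations (every right-hand side is $\sem{s}^\sharp_{j\cdot}(\vx_1,\ldots,\vx_m)$ with $s$ sequential). By Lemma~\ref{l:feas:greater} and Lemma~\ref{l:rat:stratimperhaeltzulaessigkeit}, starting from the feasible initialization ($\E\vee\neginfty$, the strategy mapping everything to $\neginfty$, and the pre-solution $\neginftyvar$), the algorithm always reaches $\E(\sigma)$ with a feasible pre-solution, so by Lemma~\ref{l:eindeutige:loesung} the value $\mu_{\geq\rho}\sem{\E(\sigma)}$ we need equals the greatest finite pre-solution $\rho^*$ of $\E(\sigma)$. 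Lemma~\ref{l:comp:greatest:by:lp} then says $\rho^*$ is computable by solving a single linear programming problem constructible in linear time, which is item~2. I would round off by noting that the $\neginfty$-components (program variables where the least solution is $\neginfty$) are handled by the feasibility machinery, so the resulting least solution of $\E\vee\neginfty$ restricted away from those components is $\mu\sem\E$, and the step count $|\MaxStrat|+|\vX|$ carries over unchanged (the extra $\vee\neginfty$ only doubles each $\bigvee$, not the asymptotics of the argument).

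The main obstacle --- really the only non-bookkeeping point --- is verifying that the per-equation SMT-based switching genuinely constitutes an \emph{improvement} in the technical sense of Definition~\ref{d:alg:verbesserung} and, more subtly, that it is compatible with the feasibility invariant needed for Lemma~\ref{l:eindeutige:loesung} to give us exactly $\mu_{\geq\rho}\sem{\E(\sigma)}$. Condition~(2) is immediate from how $\sigma_M$ is chosen, but for condition~(1) one must check that when $\rho$ is not yet a solution, the detected switch at \emph{some} equation produces $\sem{\E(\sigma')}\rho > \rho$ and not merely $\geq$; this follows because the SMT query at that equation returns a model precisely when $\sem{e}\rho > \rho(\vx)$ strictly. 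The feasibility compatibility is then handed to us by Lemma~\ref{l:rat:stratimperhaeltzulaessigkeit}, so no new argument is required there --- it is exactly the reason the initialization with $\E\vee\neginfty$ and $\neginftyvar$ was set up as it was. Everything else is a direct citation of the lemmas above, so the proof is short.
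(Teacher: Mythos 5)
Your proof is correct and follows essentially the same path as the paper's one-line proof, which simply cites Theorem~\ref{t:strat:imp:concave} together with Lemmas~\ref{l:feas:greater}, \ref{l:rat:stratimperhaeltzulaessigkeit}, \ref{l:eindeutige:loesung}, and~\ref{l:comp:greatest:by:lp}. You additionally (and correctly) invoke Lemma~\ref{l:smt:3} to justify item~1 of the statement; the paper's terse proof omits this citation even though it is clearly what accounts for the SAT-modulo-real-linear-arithmetic queries.
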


\begin{proof}
  The statement 
  follows from 
  Lemmas \ref{l:feas:greater}, 
  \ref{l:rat:stratimperhaeltzulaessigkeit}, 
  \ref{l:eindeutige:loesung}, \ref{l:comp:greatest:by:lp} and
  Theorem \ref{t:strat:imp:concave}.
  \qed
\end{proof}

\noindent
Our techniques can be extended straightforwardly
in order to get rid of the pre-condition $\mu\sem{\E} \ll \inftyvar$.
However, 
for simplicity we eschew these technicalities in the present article.

%

\section{An Upper Bound on the Complexity}
\label{s:upper}

In Section \ref{s:lower},
we have provided a lower bound on the complexity of computing abstract semantics 
of affine programs w.r.t. the template linear domains. 
In this section we show that the corresponding decision problem is not only 
$\Pi^p_2$-hard, but in fact $\Pi^p_2$-complete:

\begin{theorem}
\label{t:upper}
  The problem of deciding,
  whether or not, 
  for a given affine program $G$, 
  a given template constraint matrix $T$,
  and a given program point $v$,
  $\Values^\sharp[v] > \neginftyvar$ holds,
  is in $\Pi^p_2$.
\end{theorem}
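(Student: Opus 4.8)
The plan is to exhibit a $\Pi^p_2$ (i.e.\ $\forall\exists$) decision procedure for the complement-style question ``$\Values^\sharp[v] > \neginftyvar$.'' First I would invoke Lemma~\ref{l:eqs:abs_sem} to reduce the question to the system $\E(G)$ of abstract semantic equations: $\Values^\sharp[v] > \neginftyvar$ holds iff some component $\mu\sem{\E(G)}(\vx_{v,i}) > \neginfty$. Next I would use Theorem~\ref{t:strat:imp:abs:sem:eqs}: the strategy improvement algorithm computes $\mu\sem{\E(G)}$, and a $\vee$-strategy $\sigma$ for $\E(G)$ consists of choosing, for each $\mid$-position inside each statement on each edge and for each of the finitely many $\vee$-disjuncts, one branch; thus a strategy has polynomial size in the input. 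The key structural fact is that for \emph{any} fixed strategy $\sigma$, the system $\E(\sigma)$ is a system of \emph{basic} abstract semantic equations (Section~\ref{s:gfps}), whose right-hand sides $\sem{s\sigma}^\sharp_{j\cdot}$ are monotone and concave, and whose greatest finite pre-solution (hence, by Lemma~\ref{l:eindeutige:loesung}, its only feasible solution, equal to $\mu_{\ge\rho}\sem{\E(\sigma)}$ for any finite pre-solution $\rho$) can be computed by a single polynomially-sized linear program (Lemma~\ref{l:comp:greatest:by:lp}). Since $\mu\sem{\E(G)}$ equals $\mu_{\ge\rho}\sem{\E(\sigma)}$ for the \emph{final} strategy reached, and by Lemma~\ref{l:alg:sequence} every strategy encountered yields a value $\le \mu\sem{\E(G)}$, one gets the characterization: $\mu\sem{\E(G)}(\vx_{v,i}) > \neginfty$ iff there \emph{exists} a $\vee$-strategy $\sigma$ such that (a) the LP for $\E(\sigma)$ certifies a value $> \neginfty$ in coordinate $\vx_{v,i}$, and (b) $\sigma$ is ``optimal'', meaning it cannot be improved --- formally, $\sem{\E(\sigma)}\rho_\sigma = \sem{\E}\rho_\sigma$ where $\rho_\sigma$ is the computed solution of $\E(\sigma)$.

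Then I would turn this into a $\forall\exists$ formula. The ``$\exists$'' part guesses: the strategy $\sigma$ (polynomial size), the rational solution vector $\rho_\sigma$ of $\E(\sigma)$ witnessed by an optimal LP vertex (polynomial bit-size by standard LP bounds), and a proof that $\rho_\sigma$ is indeed the greatest finite pre-solution. Checking that $\rho_\sigma$ solves $\E(\sigma)$ and satisfies $\rho_\sigma(\vx_{v,i}) > \neginfty$ is a polynomial-time check. The remaining obligation is to verify that $\sigma$ is optimal, i.e.\ that $\sem{\E}\rho_\sigma = \rho_\sigma$. This is where the ``$\forall$'' quantifier enters: by Lemma~\ref{l:smt} / Lemma~\ref{l:smt:3}, for a fixed edge-statement $s$, fixed $j$, fixed value $\rho_\sigma$ and fixed threshold $c = \rho_\sigma(\vx_{v,j})$, the question ``$\sem{s}^\sharp_{j\cdot}\rho_\sigma > c$?'' is exactly satisfiability of the SMT formula $\Phi(s,\rho_\sigma,j,c)$, whose only nondeterminism is the choice of a path (the $a_p$ variables) together with a real point; deciding it is in $\mathrm{NP}$. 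So ``$\rho_\sigma$ is \emph{not} improvable'' is the statement that \emph{for all} edges, all $j$, the corresponding SMT formula is \emph{unsatisfiable} --- a co-$\mathrm{NP}$, hence $\forall$, predicate over the already-guessed $\rho_\sigma$. Assembling: $\exists(\sigma,\rho_\sigma,\text{LP-certificate})\,\forall(\text{path choices, points})\,[\text{poly-time checkable}]$, which is a $\Sigma^p_2$ statement; the original decision problem is its negation's complement... more precisely, ``$\Values^\sharp[v] > \neginftyvar$'' is $\Sigma^p_2$ by this construction, so I should double-check the polarity. The clean way: ``$\Values^\sharp[v] = \neginftyvar$'' is $\Pi^p_2$ --- it says \emph{for all} strategies $\sigma$ (equivalently, for the final one, but we can afford to quantify over all), \emph{there exists} no way to get a finite value, which unrolls to $\forall\exists$. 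Theorem~\ref{t:lower} already established $\Pi^p_2$-hardness of the ``$>\neginftyvar$'' problem, and a problem is $\Pi^p_2$-hard iff its complement is $\Sigma^p_2$-hard; so to match I would phrase membership for whichever polarity makes the class consistent, using that $\Pi^p_2$ and $\Sigma^p_2$ are complements and the problem as stated in Theorem~\ref{t:lower} was ``$>\neginftyvar$.'' Since the paper claims $\Pi^p_2$-completeness of the ``$>\neginftyvar$'' problem, the membership argument must put ``$>\neginftyvar$'' in $\Pi^p_2$: I would do this by noting ``$\Values^\sharp[v] > \neginftyvar$'' $\equiv$ ``$\neg(\Values^\sharp[v] = \neginftyvar)$'', show $\{\Values^\sharp[v]=\neginftyvar\}\in\Sigma^p_2$ via the $\exists\sigma\,\forall(\ldots)$ template above (guess the optimal strategy and its solution, then universally check non-improvability and that the value is $\neginfty$), and conclude the original is in $\Pi^p_2$.

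The main obstacle is the bit-complexity / guessability of the witness solution $\rho_\sigma$ and the certificate that $\rho_\sigma$ is the \emph{greatest} finite pre-solution of $\E(\sigma)$ (not just \emph{some} solution). For (i): by Lemma~\ref{l:comp:greatest:by:lp}, $\rho_\sigma$ is an optimal vertex of a linear program of size linear in the input, so by standard polyhedral bounds its entries are rationals of polynomial bit-size --- guessable. For (ii): optimality of a proposed LP solution is certifiable in polynomial time by exhibiting a matching dual solution (LP duality), so the existential branch can include that dual certificate; alternatively one verifies directly that $\rho_\sigma$ is a pre-solution of $\E(\sigma)$ and maximal among finite ones via the complementary-slackness conditions. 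A secondary subtlety is handling the $\neginfty$ / non-finite coordinates uniformly: I would restrict, as in the definition of feasibility, to the sub-system on variables with finite value, guessing which coordinate set $\vX'$ is the ``finite'' one and checking the $\neginfty$-coordinates are forced, so that Lemma~\ref{l:eindeutige:loesung} and Lemma~\ref{l:comp:greatest:by:lp} apply verbatim; the pre-condition $\mu\sem\E \ll \inftyvar$ can be assumed w.l.o.g.\ exactly as the paper does, or dispensed with by the remark following Theorem~\ref{t:strat:imp:abs:sem:eqs}. Everything else --- evaluating $\Phi(s)$, checking a single $\sem{s}^\sharp_{j\cdot}$ improvement query --- is routine given Lemmas~\ref{l:smt} and \ref{l:smt:3}.
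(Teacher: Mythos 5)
Your final formulation lands on the same argument as the paper's sketch: show that, for a given coordinate, ``$(\Values^\sharp[v])_{i\cdot} = \neginfty$'' is in $\Sigma^p_2 = \mathrm{NP}^{\mathrm{NP}}$ by guessing a $\vee$-strategy $\sigma$, computing (or guess-and-certifying via an LP optimality/duality certificate) the unique feasible solution $\rho_\sigma$ of $\E(\sigma)$ via Lemmas~\ref{l:eindeutige:loesung} and \ref{l:comp:greatest:by:lp}, using an NP oracle --- the SMT queries of Section~\ref{s:improve} --- to verify that $\sigma$ admits no improvement at $\rho_\sigma$ (equivalently $\rho_\sigma \in \Sol(\E(G))$, hence $\rho_\sigma \geq \mu\sem{\E(G)}$), and accepting when $\rho_\sigma(\vx_{v,i}) = \neginfty$; closure under union/intersection over the finitely many $i$ then gives that the complement of the stated problem is in $\Sigma^p_2$, so the problem itself is in $\Pi^p_2$. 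Your extra care about the bit-size of the LP vertex and about the handling of the $\neginfty$-coordinates is warranted --- the paper glosses over both.

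Be careful, though, about the intermediate ``characterization'' you state: the equivalence ``$\mu\sem{\E(G)}(\vx_{v,i}) > \neginfty$ iff there exists a non-improvable $\sigma$ whose LP solution is finite at $(v,i)$'' holds only in the forward direction. Non-improvability of $\sigma$ at $\rho_\sigma$ gives $\rho_\sigma\in\Sol(\E(G))$ and hence an \emph{upper} bound $\rho_\sigma \geq \mu\sem{\E(G)}$; a finite upper bound says nothing about finiteness of the least fixpoint. Lemma~\ref{l:alg:sequence} provides lower bounds only for iterates \emph{reached} by the algorithm from the initial feasible pre-solution, not for the unique feasible solution of $\E(\sigma)$ of an arbitrary guessed $\sigma$. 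If this ``iff'' were true it would place the ``$>\neginftyvar$'' problem in $\Sigma^p_2$, which together with Theorem~\ref{t:lower} would collapse the second level of the hierarchy. You noticed the polarity problem yourself and rightly switched to certifying the ``$=\neginfty$'' side, for which the upper bound is exactly the needed direction; just drop the unproved (and false) backward implication from the write-up.
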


\begin{proof}{\it (Sketch)}
  We have to show that the problem of deciding,
  whether or not,
  for a given affine program $G$,
  a given template constraint matrix $T$,
  a given program point $v$,
  and a given $i \in \{1,\ldots,m \}$,
  $(\Values^\sharp[v])_{i\cdot} = \neginfty$ holds,
  is in 
  $\mathsf{co}{-}\Pi^p_2 = \Sigma^p_2 = \mathrm{NP}^\mathrm{NP}$.
  In polynomial time we can guess a $\vee$-strategy $\sigma$ for 
  $\E' := \E(G)$ and compute the \emph{least feasible solution} $\rho$ 
  of $\E'(\sigma)$ 
  (see \citet{DBLP:conf/csl/GawlitzaS07}).
  Because of Lemma \ref{l:diamand:is:np:complete},
  we can use a NP oracle to determine whether or not there exists an improvement of the strategy $\sigma$ w.r.t.\ $\rho$.
  If this is not the case,
  we know that 
  $\rho \geq \mu\sem{\E'}$ holds.
  Therefore,
  by Lemma \ref{l:eqs:abs_sem},
  we have $\rho(\vx_{v,i}) \geq (\Values^\sharp[v])_{i\cdot}$.
  Thus we can accept,
  whenever $\rho(\vx_{v,i}) = \neginfty$ holds.
  \qed
\end{proof}


\noindent
Finally, 
we give an example where our strategy improvement algorithm 
performs exponentially many strategy improvement steps.
It is similar to the program in the proof of Theorem \ref{t:lower}.
For all $n \in \N$,
we consider the program $G_n = (N,E,\start)$,
  where
  $N = \{ \start, 1 \}$, 
  $E = \{ (\start, x_1 := 0 ; y_1 := 1; y_2 := 2 y_1; \ldots ; y_n := 2 y_{n-1} , 1), (1,s,1) \}$,
  and
  \begin{align*}
    s
    \;=\;
    x_2 := x_1; 
    &\;( x_2 \geq y_n; x_2 := x_2 - y_n \mid x_2 \leq y_n - 1 ); \cdots \\
    &\; 
    ( x_2 \geq y_1; x_2 := x_2 - y_1 \mid x_2 \leq y_1 - 1 ); 
    \; 
    x_1 := x_1 + 1 
    .
  \end{align*}
  
\noindent
It is sufficient to use a template constraint matrix that corresponds to the interval domain.
It is remarkable that the strategy iteration does not depend on the
strategy improvement operator in use.
At any time there is exactly one possible improvement until the least solution is reached.
All strategies for the statement $s$ will be encountered.
Thus, the strategy improvement algorithm performs $2^n$ strategy improvement steps.
Since the size 
of $G_n$ is $\Theta(n)$,
exponentially many strategy improvement steps are performed.

\section{Conclusion}
\label{s:conc}

We presented an extension of the strategy improvement algorithm of 
\citet{DBLP:conf/esop/GawlitzaS07,DBLP:conf/csl/GawlitzaS07,gawlitza_sas_10}
which enables us to use a SAT modulo real linear arithmetic solver for
determining improvements of strategies w.r.t.\ current approximates.
Due to this extension,
we are able to compute abstract semantics 
of affine programs w.r.t.\ the template linear constraint domains
of \citet{DBLP:conf/vmcai/SankaranarayananSM05},
where we abstract sequences of if-then-else statements without loops en bloc.
This gives us additional precision.
Additionally,
We provided one of the few ``hard'' complexity results regarding precise abstract interpretation.
%
%

It remains to practically evaluate the presented approach 
and to compare it systematically with other approaches.
Besides this, starting from the present work, 
there are several directions to explore.
One can for instance try to apply the same ideas for non-linear templates \citep{gawlitza_sas_10}, or to use linearization techniques~\citep{Mine_PhD}.

\bibliographystyle{dmplainnat}
\bibliography{bib}

\begin{thebibliography}{46}
\providecommand{\natexlab}[1]{#1}
\providecommand{\url}[1]{\texttt{#1}}
\expandafter\ifx\csname urlstyle\endcsname\relax
  \providecommand{\doi}[1]{doi: #1}\else
  \providecommand{\doi}{doi: \begingroup \urlstyle{rm}\Url}\fi

\bibitem[{Adj{\'e}} et~al.(2008){Adj{\'e}}, {Gaubert}, and
  {Goubault}]{2008arXiv0806.1160A}
A.~{Adj{\'e}}, S.~{Gaubert}, and E.~{Goubault}.
\newblock Computing the smallest fixed point of nonexpansive mappings arising
  in game theory and static analysis of programs.
\newblock \emph{ArXiv e-prints}, June 2008.
\newblock \arXiv{0806.1160v2}.

\bibitem[Adj{\'e} et~al.(2010)Adj{\'e}, Gaubert, and
  Goubault]{DBLP:conf/esop/AdjeGG10}
Assal{\'e} Adj{\'e}, Stephane Gaubert, and Eric Goubault.
\newblock Coupling policy iteration with semi-definite relaxation to compute
  accurate numerical invariants in static analysis.
\newblock In Andrew~D. Gordon, editor, \emph{ESOP}, volume 6012 of \emph{LNCS},
  pages 23--42. Springer, 2010.
\newblock \isbn{978-3-642-11956-9}.

\bibitem[Ball and Jones(2006)]{DBLP:conf/cav/2006}
Thomas Ball and Robert~B. Jones, editors.
\newblock \emph{Computer Aided Verification (CAV)}, volume 4144 of \emph{LNCS},
  2006. Springer.
\newblock \isbn{3-540-37406-X}.

\bibitem[Bj{\"o}rklund et~al.(2002)Bj{\"o}rklund, Sandberg, and
  Vorobyov]{bjoerklund02}
H.~Bj{\"o}rklund, S.~Sandberg, and S.~Vorobyov.
\newblock Optimization on completely unimodal hypercubes.
\newblock Technical report 2002-18, Uppsala University, 2002.

\bibitem[Bjorklund et~al.(2003)Bjorklund, Sandberg, and Vorobyov]{Bjork2003}
Henrik Bjorklund, Sven Sandberg, and Sergei Vorobyov.
\newblock {Complexity of Model Checking by Iterative Improvement: the
  Pseudo-Boolean Framework }.
\newblock In \emph{Proc. 5th Int. Andrei Ershov Memorial Conf. Perspectives of
  System Informatics}, pages 381--394. LNCS 2890, Springer, 2003.
\newblock \doi{10.1007/978-3-540-39866-0_38}.

\bibitem[Blanchet et~al.(2003)Blanchet, Cousot, Cousot, Feret, Mauborgne,
  Min\'e, Monniaux, and Rival]{ASTREE:PLDI03}
Bruno Blanchet, Patrick Cousot, Radhia Cousot, J\'er\^ome Feret, Laurent
  Mauborgne, Antoine Min\'e, David Monniaux, and Xavier Rival.
\newblock A static analyzer for large safety-critical software.
\newblock In \emph{Programming Language Design and Implementation (PLDI)},
  pages 196--207. ACM, 2003.
\newblock \isbn{1-58113-662-5}.
\newblock \doi{10.1145/781131.781153}.

\bibitem[Cochet-Terrasson et~al.(1999)Cochet-Terrasson, Gaubert, and
  Gunawardena]{Cochet99}
Jean Cochet-Terrasson, St{\'e}phane Gaubert, and Jeremy Gunawardena.
\newblock {A Constructive Fixed Point Theorem for Min-Max Functions}.
\newblock \emph{Dynamics and Stability of Systems}, 14\penalty0 (4):\penalty0
  407--433, 1999.

\bibitem[Col{\'o}n et~al.(2003)Col{\'o}n, Sankaranarayanan, and
  Sipma]{Colon_CAV03}
Michael~A. Col{\'o}n, Sriram Sankaranarayanan, and Henny Sipma.
\newblock Linear invariant generation using non-linear constraint solving.
\newblock In \emph{Computer Aided Verification (CAV)}, number 2725 in LNCS,
  pages 420--433. Springer, 2003.
\newblock \isbn{3-540-40524-0}.
\newblock \doi{10.1007/b11831}.

\bibitem[Costan et~al.(2005)Costan, Gaubert, Goubault, Martel, and
  Putot]{Costan05}
Alexandru Costan, Stephane Gaubert, Eric Goubault, Matthieu Martel, and Sylvie
  Putot.
\newblock {A Policy Iteration Algorithm for Computing Fixed Points in Static
  Analysis of Programs.}
\newblock In \emph{Computer Aided Verification, 17th Int. Conf. (CAV)}, pages
  462--475. LNCS 3576, Springer Verlag, 2005.
\newblock \isbn{3-540-27231-3}.
\newblock \doi{10.1007/11513988_46}.

\bibitem[Cousot(2005)]{Cousot05-VMCAI}
Patrick Cousot.
\newblock Proving program invariance and termination by parametric abstraction,
  {L}agrangian relaxation and semidefinite programming.
\newblock In Radhia Cousot, editor, \emph{Verification, Model Checking and
  Abstract Interpretation (VMCAI)}, number 3385 in LNCS, pages 1--24. Springer,
  2005.
\newblock \isbn{3-540-24297-X}.
\newblock \doi{10.1007/b105073}.

\bibitem[Cousot and Cousot(1976)]{CouCou76}
Patrick Cousot and Radhia Cousot.
\newblock {Static Determination of Dynamic Properties of Programs}.
\newblock In \emph{Second Int. Symp. on Programming}, pages 106--130. Dunod,
  Paris, France, 1976.

\bibitem[Cousot and Cousot(1977)]{DBLP:conf/popl/CousotC77}
Patrick Cousot and Radhia Cousot.
\newblock Abstract interpretation: A unified lattice model for static analysis
  of programs by construction or approximation of fixpoints.
\newblock In \emph{POPL}, pages 238--252, 1977.
\newblock \doi{10.1145/512950.512973}.

\bibitem[Cousot and Halbwachs(1978)]{DBLP:conf/popl/CousotH78}
Patrick Cousot and Nicolas Halbwachs.
\newblock Automatic discovery of linear restraints among variables of a
  program.
\newblock In \emph{POPL}, pages 84--96, 1978.
\newblock \doi{10.1145/512760.512770}.

\bibitem[Dutertre and de~Moura(2006{\natexlab{a}})]{yices}
Bruno Dutertre and Leonardo de~Moura.
\newblock The {Y}ices {SMT} solver.
\newblock Tool paper at \url{http://yices.csl.sri.com/tool-paper.pdf}, August
  2006{\natexlab{a}}.

\bibitem[Dutertre and de~Moura(2006{\natexlab{b}})]{DBLP:conf/cav/DutertreM06}
Bruno Dutertre and Leonardo~Mendon\c{c}a de~Moura.
\newblock A fast linear-arithmetic solver for {DPLL(T)}.
\newblock In  \citet{DBLP:conf/cav/2006}, pages 81--94.
\newblock \isbn{3-540-37406-X}.
\newblock \doi{10.1007/11817963_11}.

\bibitem[Gaubert et~al.(2007)Gaubert, Goubault, Taly, and
  Zennou]{DBLP:conf/esop/GaubertGTZ07}
Stephane Gaubert, Eric Goubault, Ankur Taly, and Sarah Zennou.
\newblock Static analysis by policy iteration on relational domains.
\newblock In  \citet{DBLP:conf/esop/2007}, pages 237--252.
\newblock \isbn{978-3-540-71314-2}.

\bibitem[Gawlitza and Seidl(2007{\natexlab{a}})]{DBLP:conf/csl/GawlitzaS07}
Thomas Gawlitza and Helmut Seidl.
\newblock Precise relational invariants through strategy iteration.
\newblock In Jacques Duparc and Thomas~A. Henzinger, editors, \emph{CSL},
  volume 4646 of \emph{LNCS}, pages 23--40. Springer, 2007{\natexlab{a}}.
\newblock \isbn{978-3-540-74914-1}.

\bibitem[Gawlitza and Seidl(2007{\natexlab{b}})]{DBLP:conf/esop/GawlitzaS07}
Thomas Gawlitza and Helmut Seidl.
\newblock Precise fixpoint computation through strategy iteration.
\newblock In  \citet{DBLP:conf/esop/2007}, pages 300--315.
\newblock \isbn{978-3-540-71314-2}.

\bibitem[Gawlitza and Seidl(2008)]{DBLP:conf/fm/GawlitzaS08}
Thomas Gawlitza and Helmut Seidl.
\newblock Precise interval analysis vs. parity games.
\newblock In Jorge Cu{\'e}llar, T.~S.~E. Maibaum, and Kaisa Sere, editors,
  \emph{FM}, volume 5014 of \emph{LNCS}, pages 342--357. Springer, 2008.
\newblock \isbn{978-3-540-68235-6}.

\bibitem[Gawlitza et~al.(2009)Gawlitza, Leroux, Reineke, Seidl, Sutre, and
  Wilhelm]{DBLP:conf/birthday/GawlitzaLRSSW09}
Thomas Gawlitza, J{\'e}r{\^o}me Leroux, Jan Reineke, Helmut Seidl, Gr{\'e}goire
  Sutre, and Reinhard Wilhelm.
\newblock Polynomial precise interval analysis revisited.
\newblock In Susanne Albers, Helmut Alt, and Stefan N{\"a}her, editors,
  \emph{Efficient Algorithms}, volume 5760 of \emph{LNCS}, pages 422--437.
  Springer, 2009.
\newblock \isbn{978-3-642-03455-8}.

\bibitem[Gawlitza and Seidl(2010)]{gawlitza_sas_10}
Thomas~Martin Gawlitza and Helmut Seidl.
\newblock Computing relaxed abstract semantics w.r.t. quadratic zones
  precisely.
\newblock In \emph{SAS}, volume 6337 of \emph{LNCS}, pages 271--286. Springer,
  2010.
\newblock \isbn{3-642-15768-8}.
\newblock \doi{10.1007/978-3-642-15769-1_17}.

\bibitem[Gawlitza and Seidl(2009)]{rationals}
Thomas~Martin Gawlitza and Helmut Seidl.
\newblock Solving systems of rational equations through strategy iteration.
\newblock Technical report, TUM, 2009.

\bibitem[Gonnord(2007)]{Gonnord_PhD}
Laure Gonnord.
\newblock \emph{Accel\'eration abstraite pour l'am\'elioration de la
  pr\'ecision en analyse des relations lin\'eaires}.
\newblock PhD thesis, Universit\'e Joseph Fourier, October 2007.
\newblock URL \url{http://tel.archives-ouvertes.fr/tel-00196899/en/}.

\bibitem[Gonnord and Halbwachs(2006)]{DBLP:conf/sas/GonnordH06}
Laure Gonnord and Nicolas Halbwachs.
\newblock Combining widening and acceleration in linear relation analysis.
\newblock In Kwangkeun Yi, editor, \emph{SAS}, volume 4134 of \emph{LNCS},
  pages 144--160. Springer, 2006.
\newblock \isbn{3-540-37756-5}.

\bibitem[Gopan and Reps(2006)]{DBLP:conf/cav/GopanR06}
Denis Gopan and Thomas~W. Reps.
\newblock Lookahead widening.
\newblock In  \citet{DBLP:conf/cav/2006}, pages 452--466.
\newblock \isbn{3-540-37406-X}.
\newblock \doi{10.1007/11817963_41}.

\bibitem[{H. G. Rice}(1953)]{rice}
{H. G. Rice}.
\newblock Classes of recursively enumerable sets and their decision problems.
\newblock In \emph{Transactions of the American Mathematical Society},
  volume~74, pages 358--366. AMS, 1953.

\bibitem[Halbwachs(1993)]{Halbwachs:CAV93}
Nicolas Halbwachs.
\newblock Delay analysis in synchronous programs.
\newblock In Costas Courcoubetis, editor, \emph{Computer Aided Verification
  (CAV)}, volume 697 of \emph{LNCS}, pages 333--346. Springer, 1993.
\newblock \isbn{3-540-56922-7}.
\newblock \doi{10.1007/3-540-56922-7_28}.

\bibitem[Hoffman and Karp(1966)]{HoffmanKarp}
A.J. Hoffman and R.M. Karp.
\newblock {On Nonterminating Stochastic Games}.
\newblock \emph{Management Sci.}, 12:\penalty0 359--370, 1966.

\bibitem[Howard(1960)]{Howard}
R.~Howard.
\newblock \emph{{Dynamic Programming and Markov Processes}}.
\newblock Wiley, NY, 1960.

\bibitem[Leconte et~al.(2009)Leconte, Roux, Liberti, and Marinelli]{leo09}
Jeremy Leconte, Stephane~Le Roux, Leo Liberti, and Fabrizio Marinelli.
\newblock Code verification by static analysis: a mathematical programming
  approach.
\newblock Technical report, LIX, Ecole Polytechnique, Palaiseau, August 2009.

\bibitem[Leroux and Sutre(2007)]{LEROUX-SUTRE-SAS2007}
J{\'e}r{\^o}me Leroux and Gr{\'e}goire Sutre.
\newblock Accelerated data-flow analysis.
\newblock In \emph{Static Analysis (SAS)}, volume 4634 of \emph{LNCS}, pages
  184--199. Springer, 2007.
\newblock \doi{10.1007/s10009-008-0064-3}.

\bibitem[Megiddo(1987)]{Megiddo87}
Nimrod Megiddo.
\newblock {On the Complexity of Linear Programming}.
\newblock In T.~Bewley, editor, \emph{Advances in Economic Theory: 5th World
  Congress}, pages 225--268. Cambridge University Press, 1987.

\bibitem[Min{\'e}(2001{\natexlab{a}})]{DBLP:conf/pado/Mine01}
Antoine Min{\'e}.
\newblock A new numerical abstract domain based on difference-bound matrices.
\newblock In Olivier Danvy and Andrzej Filinski, editors, \emph{PADO}, volume
  2053 of \emph{LNCS}, pages 155--172. Springer, 2001{\natexlab{a}}.
\newblock \isbn{3-540-42068-1}.

\bibitem[Min{\'e}(2001{\natexlab{b}})]{DBLP:conf/wcre/Mine01}
Antoine Min{\'e}.
\newblock The octagon abstract domain.
\newblock In \emph{WCRE}, pages 310--, 2001{\natexlab{b}}.

\bibitem[Min\'e(2004)]{Mine_PhD}
Antoine Min\'e.
\newblock \emph{Domaines num\'eriques abstraits faiblement relationnels}.
\newblock PhD thesis, \'Ecole polytechnique, 2004.

\bibitem[Monniaux(2008)]{DBLP:conf/lpar/Monniaux08}
David Monniaux.
\newblock A quantifier elimination algorithm for linear real arithmetic.
\newblock In Iliano Cervesato, Helmut Veith, and Andrei Voronkov, editors,
  \emph{LPAR}, volume 5330 of \emph{LNCS}, pages 243--257. Springer, 2008.
\newblock \isbn{978-3-540-89438-4}.

\bibitem[Monniaux(2009)]{DBLP:conf/popl/Monniaux09}
David Monniaux.
\newblock Automatic modular abstractions for linear constraints.
\newblock In Zhong Shao and Benjamin~C. Pierce, editors, \emph{POPL}, pages
  140--151. ACM, 2009.
\newblock \isbn{978-1-60558-379-2}.

\bibitem[Nicola(2007)]{DBLP:conf/esop/2007}
Rocco~De Nicola, editor.
\newblock \emph{Programming Languages and Systems, ESOP 2007, Braga, Portugal,
  March 24 - April 1, 2007, Proceedings}, volume 4421 of \emph{LNCS}, 2007.
  Springer.
\newblock \isbn{978-3-540-71314-2}.

\bibitem[Puri(1995)]{Puri95}
Anuj Puri.
\newblock \emph{{Theory of Hybrid and Discrete Systems}}.
\newblock PhD thesis, University of California, Berkeley, 1995.

\bibitem[Puterman(1994)]{Puterman}
Martin~L. Puterman.
\newblock \emph{{Markov Decision Processes: Discrete Stochastic Dynamic
  Programming}}.
\newblock Wiley, New York, 1994.

\bibitem[Sankaranarayanan et~al.(2004)Sankaranarayanan, Sipma, and
  Manna]{Sankaranarayanan_SAS04}
Sriram Sankaranarayanan, Henny Sipma, and Zohar Manna.
\newblock Constraint-based linear-relations analysis.
\newblock In \emph{Static Analysis (SAS)}, number 3148 in LNCS, pages 53--68.
  Springer, 2004.
\newblock \doi{10.1007/b99688}.

\bibitem[Sankaranarayanan et~al.(2005)Sankaranarayanan, Sipma, and
  Manna]{DBLP:conf/vmcai/SankaranarayananSM05}
Sriram Sankaranarayanan, Henny~B. Sipma, and Zohar Manna.
\newblock Scalable analysis of linear systems using mathematical programming.
\newblock In Radhia Cousot, editor, \emph{VMCAI}, volume 3385 of \emph{LNCS},
  pages 25--41. Springer, 2005.
\newblock \isbn{3-540-24297-X}.

\bibitem[Schrijver(1986)]{LP1}
Alexandeer Schrijver.
\newblock \emph{Theory of linear and integer programming}.
\newblock John Wiley \& Sons, Inc., New York, NY, USA, 1986.
\newblock \isbn{0-471-90854-1}.

\bibitem[Stockmeyer(1976)]{Stockmeyer76}
Larry~J. Stockmeyer.
\newblock The polynomial-time hierarchy.
\newblock \emph{Theoretical Computer Science}, 3\penalty0 (1):\penalty0 1--22,
  October 1976.
\newblock \doi{10.1016/0304-3975(76)90061-X}.

\bibitem[V{\"o}ge and Jurdzi{\'n}ski(2000)]{Voege00}
Jens V{\"o}ge and Marcin Jurdzi{\'n}ski.
\newblock {A Discrete Strategy Improvement Algorithm for Solving Parity Games}.
\newblock In \emph{Computer Aided Verification, 12th Int. Conf. (CAV)}, pages
  202--215. LNCS 1855, Springer, 2000.

\bibitem[Wrathall(1976)]{DBLP:journals/tcs/Wrathall76}
Celia Wrathall.
\newblock Complete sets and the polynomial-time hierarchy.
\newblock \emph{Theor. Comput. Sci.}, 3\penalty0 (1):\penalty0 23--33, 1976.
\newblock \doi{10.1016/0304-3975(76)90062-1}.

\end{thebibliography}
\end{document}